\newcommand{\V}[1]{\ensuremath{\boldsymbol{#1}}\xspace}
\theoremstyle{theorem}
\newtheorem{thm}{Theorem}
\newtheorem{defi}{Definition}
\newtheorem{prop}{Proposition}
\newtheorem{coro}{Corollary}
\newtheorem{rem}{Remark}
\newcommand{\e}{\mathbb{E}}
\newcommand{\p}{\mathbb{P}}
\newcommand{\ccal}{\mathcal{C}}
\newcommand{\ycal}{\mathcal{Y}}
\newcommand{\zcal}{\mathcal{Z}}
\newcommand{\bR}{\mathbb{R}}
\newcommand{\mbone}{{\mathbf{1}}}
\newcommand{\norm}[1]{\Vert{#1}\Vert}
\def\text#1{\mbox{\rm #1}}
\title{ Diffusion Source Identification on General Networks with Statistical Confidence}
\date{}
\author{Quinlan Dawkins, Tianxi Li, Haifeng Xu}
\begin{document}

\title{\textbf{ Diffusion Source Identification on Networks with Statistical Confidence}}

%
\date{}
\author{Quinlan Dawkins, Tianxi Li, Haifeng Xu\\
University of Virginia
}
	
	\maketitle 
 \begin{abstract}
Diffusion source identification on networks is a problem of fundamental importance in a broad class of applications, including rumor controlling and virus identification.  Though this problem has received significant recent attention, most studies have focused only on very restrictive settings and lack theoretical guarantees for more realistic networks. We introduce a statistical framework for the study of diffusion source identification and develop a confidence set inference approach inspired by hypothesis testing. Our method efficiently produces a small subset of nodes, which provably covers the source node with any pre-specified confidence level without restrictive assumptions on network structures. Moreover, we propose multiple Monte Carlo strategies for the inference procedure based on network topology and the probabilistic properties that significantly improve the scalability. To our knowledge, this is the first diffusion source identification method with a practically useful theoretical guarantee on general networks. We demonstrate our approach via extensive synthetic experiments on well-known random network models and a mobility network between cities concerning the COVID-19 spreading.   \end{abstract}

 \section{Introduction}

One pressing problem  today is the spreading of misinformation or malicious attacks/virus in various cyberspaces.  For example, rumors and fake news on social networks may result in many serious political, economic, and social issues \citep{vosoughi2018spread}. Viruses that spread via emails and computer communication may cause severe privacy and leakage problems \cite{newman2002email,halperin2002system,xu2016propagation}. The negative impacts stem from a few source users/locations and then spread over the social networks via a \emph{diffusion process} in such events. One crucial step to reduce the loss from such an event is to quickly identify the sources so that counter-measures can be taken in a timely fashion. 

Though early practices have been done for this important problem with motivations from various domains, systematic research on this problem only began very recently, arguably starting from the seminal work of \cite{shah2011rumors}, which proposed a \emph{rumor center} estimator that can be located by an efficient message-passing algorithm with linear time complexity. Despite the significant interest and progress on this problem in recent years \citep{shah2012finding, dong2013rooting, khim2016confidence, bubeck2017finding,yu2018rumor,crane2020inference},  many challenges remain unaddressed. First, the theoretical understanding of these methods is currently only available under very restrictive and somewhat unrealistic structural assumptions of the networks such as \emph{regular trees}. This is perhaps partially explained by the well-known computational hardness about the probabilistic  inference of diffusion process in general graphs \cite{shapiro2012finding}. Therefore, intuitive approximations have been used for general networks \citep{nguyen2016multiple,kazemitabar2020approximate}. However, such methods lack theoretical guarantees. Second, even for regular trees, the available performance guarantee is far from being useful in practice. Even in the most idealized situation of infinite regular trees, the correct probability of the rumor center is almost always below  $0.3$ \citep{shah2011rumors, dong2013rooting, yu2018rumor}.  For general graphs, as we show later, the correct rate of such a \emph{single-point} estimation method only becomes  too low to be practical. 

To guarantee higher success probability, a typical   approach, as in both machine learning theory \cite{valiant1984theory} and data-driven applied models \cite{lecun2015deep},  is perhaps to obtain more data.  However, a fundamental challenge in   diffusion source identification (DSI)   is that \emph{the problem by nature   has only one snapshot of the network information}, i.e., the earliest observation about the infection status of the network.\footnote{Since infected nodes are usually indistinguishable and equally infectious,  any additional information in  later  observations only tells us  which \emph{new} or  \emph{additional}  nodes are infected and is not helpful for us to infer the source node.}   
Therefore, compared to classic learning tasks, DSI poses a fundamentally different challenge for inference.It is the above crucial understanding that motivates our adoption of a different statistical inference technique, the confidence set. Previously systematic statistical studies adopt the confidence set approach for DSI on trees \citep{bubeck2017finding,khim2016confidence, crane2020inference}.   Though they enjoy good theoretical properties, the methods are applicable only on infinite trees.

This paper aims to bridge the gap between practically useful algorithms and theoretical guarantees for the DSI problem. We introduce a new statistical inference framework which provably includes many previous  methods \cite{shah2011rumors,nguyen2016multiple}  as special cases. Our new framework not only  highlights the drawback of the previous methods but, more importantly, also leads to the design of our confidence set inference approach with \emph{finite-sample} theoretical guarantee on \emph{any} network structures.  

As a demonstration,  consider the example of the COVID-19 spreading procedure in early 2020. Figure~\ref{fig:china-citya} shows a travel mobility network between 49 major cities in China, constructed from the two-week travel volume \citep{DVN/FAEZIO_2020,hu2020building} before the virus caught wide attention. The square nodes (21 out of 49) are all cities with at least five confirmed cases of the virus on Jan 24, 2020.  The DSI problem is:  given  only knowledge about the   mobility network and  which cities have detected a notable amount of confirmed cases (in this case, at least 5) , can we identify in which city   the virus was first detected?  
%
 \begin{figure}[ht]
\vspace{-1.1cm}
  \begin{center}
    \includegraphics[width=0.6\textwidth]{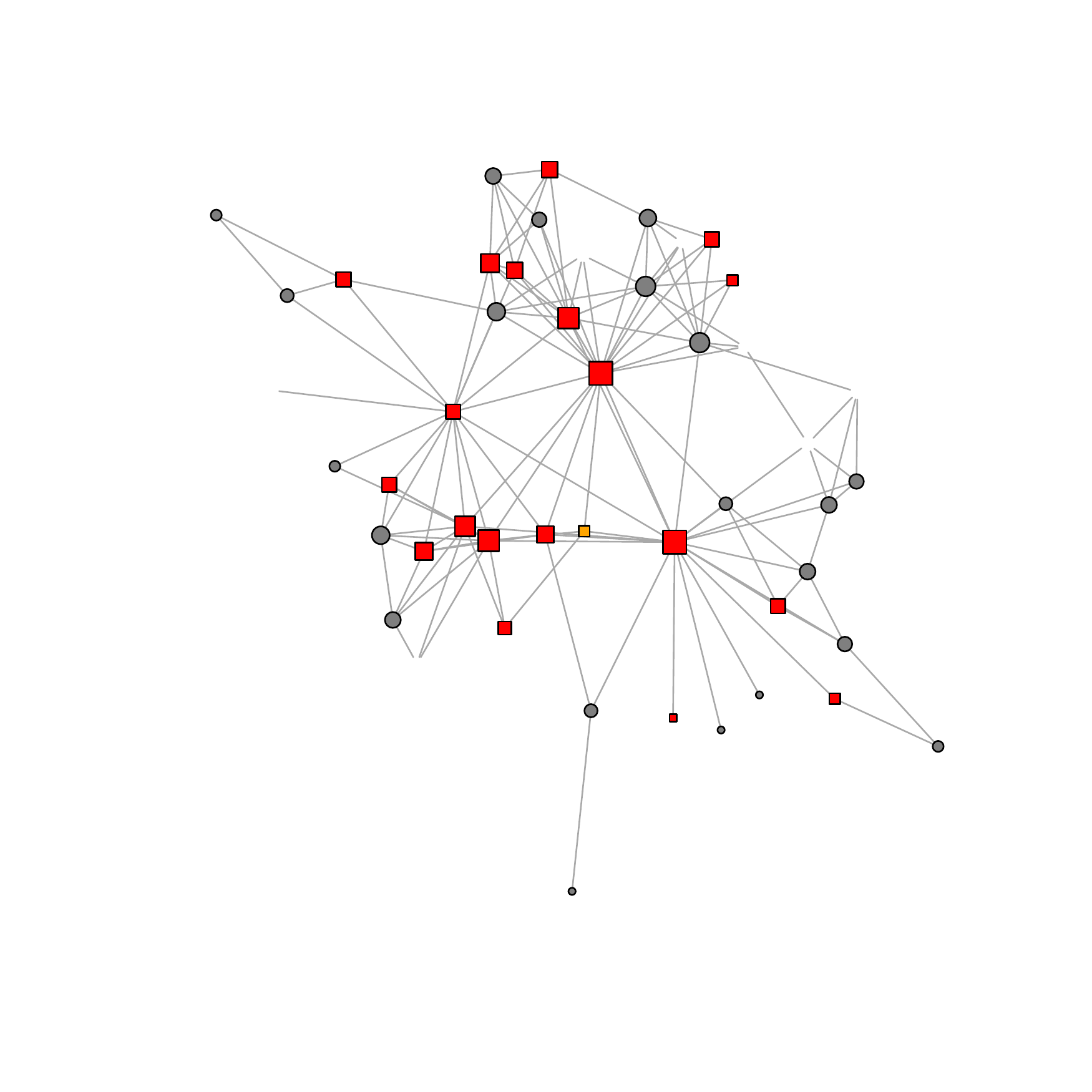}
  \end{center}
  \vspace{-1.2cm}
  \caption{ The mobility network and the COVID-19 infection status of major Chinese cities on Jan 24, 2020. Colored square nodes are cities with  at least five confirmed cases. 
  }
  \label{fig:china-citya}
\end{figure}
\vspace{-0.2cm}

This problem turns out to be too difficult for precise identification. None of the single-point source identification methods under evaluation can successfully identify Wuhan due to its relatively non-central position from the network (details in Section~\ref{sec:sim}). Nevertheless, both of our 80\% and 90\% confidence sets cover Wuhan correctly, giving recommendations of 6 nodes and 11 nodes (out of 49 cities), respectively. In fact, the evaluation on all the whole week after the lockdown of Wuhan reveals that both confidence sets correctly cover Wuhan in all the seven days, while the single-point estimation methods are rarely effective. Such a result evidently shows the necessity of adopting confidence set approach and the effectiveness of our solution.
Our   contributions in this paper can be summarized in three-folds. 
\begin{enumerate}
\item We introduce an innovative statistical framework for the DSI problem. It includes several previous methods as special cases, but has the potential for more effective inference. 	  
\item Under our framework, we propose a general way to construct the source node confidence set, whose validity can be guaranteed for  finite sample size and any network structures. It is the first DSI method with a theoretical performance guarantee on general networks, to the best of our knowledge. 	\vspace{-1mm}
\item We propose   techniques that dramatically improve the computational efficiency of our inference algorithm. En route, we develop a generalized importance sampling method, which may be of independent interest.  
\end{enumerate}
 
A high-level message in the paper is that the confidence set approach, which did not receive adequate attention in the machine learning literature, can be an important tool for inference tasks, especially for challenging problems with limited available data.

 \section{Preliminaries}
 
We start by formalizing the \emph{Diffusion Source  Identification} (DSI) problem,  introduced in the seminal work of \citet{shah2011rumors}.  Consider a network $G$  with node set $V = \{1, \cdots, n\}$ and edge set $E$. For ease of presentation, we  focus on unweighted and undirected networks but it is straightforward to generalize the model and our framework to  weighted networks.  We write $(u,v) \in E$ if node $u$ and $v$ are connected.  The network can   be equivalently represented by its $n\times n$ binary \emph{adjacency matrix} $A$, where $A_{uv} =  A_{vu}=1$ if and only if $(u, v) \in E$.

There is a \emph{source node}  $s^*\in V$  on the network $G$ initiating a diffusion of a certain effect (rumor, fake news or some virus) over the network $G$.  We embed our inference of the diffusion procedure under the widely-adopted  ``Susceptible-Infected" (SI) model \citep{anderson1992infectious,shah2011rumors}, though our approach  can be easily tailored to   other diffusion procedure as well.   In the SI model,  the source node $s^* $ is the only ``infected" node initially. The infection diffuses as follows:  given the set of currently infected nodes after $t-1$ infections, the next infection happens by  sampling uniformly at random one of the \emph{edges} connecting an infected node and a susceptible node. Consequently, a full \emph{diffusion path} with $T$ infections  can be represented by a sequence of $T+1$ nodes in the infection order. We define the \emph{diffusion path space} to be
\begin{align*}
 & \zcal_{T} = \{\V{v} = \{s^*=v_0,   v_1, \cdots, v_{T}\}:  v_t \in V,   v_{t_1} \ne v_{t_2} \\ 
 & \text{~if~} t_1 \ne t_2, \text{ and } (v_t, v_{t'}) \in E    \text{~ for some $ t < t'$ }  \}
\end{align*} 
However, in practice,  when the occurrence of the infection is noticed,   we have already lost the information about the diffusion path.   Instead,  the available  data  only contain the \emph{snapshot} of the current infection status on the network without the infection order.  Formally, the data can be represented as an $n$-dimensional binary vector $y$ with $y_i = \mbone(i \text{~is infected}) \in \{ 0, 1\}$, where $\mbone$ is the standard indicator function. Therefore, the \emph{sample space} of the DSI problem can be defined as 
\begin{align*}
& \ycal_{T} = \{y \in \{0,1\}^n:  \norm{y}_1= T, \text{such that } \{i: y_i = 1\}  \\ 
& \text{~ induces a connected subgraph of $G$}\}.
\end{align*} 
Equivalently, we will also think of any $y \in \ycal_{T}$ as the a infected subset of nodes $V_I \subset V$ with size $T$. 
The DSI problem can then be defined as follows. 
\begin{defi}[Diffusion Source Identification]
Given one sample $y \in \ycal_{T}$,  identify the source node $s^*$ of the diffusion process that generates $y$. 
\end{defi}

\noindent {\bf Challenges.} The challenge of DSI intrinsically arises  from the  loss of information in the observed data. Specifically,  by definition, we have a \emph{many-to-one} mapping $\zeta: \zcal_{T} \to \ycal_{T}$, such that $\zeta(\cdot)$ maps a  diffusion path  to the corresponding  infection snapshot of the network. Information about the infection order has been lost upon the observation of  data $y$.   Nevertheless, the DSI problem looks to identify the first node in the infection order, with access to only one snapshot of the infection status. Note that obtaining multiple snapshots  over time does not reduce the difficulty of DSI. This is because, given the current snapshot, later observed data carry no additional information about the source node due to the Markov property of the SI model. 
 
\section{A General Statistical   Framework for DSI with Confidence Guarantees}\label{sec:method}
\subsection{DSI as Parameter Estimation}
 We start by formulating  DSI under a systematic statistical framework, which will help in our design of better inference methods later on. Treating the network $G$ as fixed and $s^*$ as the model parameter,   the probability of  generating data $y \in \ycal_T$   can be represented by $\p_{s^*}(Y=y) = p(y|s^*).$ where random variable $Y$ denotes the observed data. 
The identification of $s^*$ can then be treated as a parameter estimation problem.  Specifically, we consider the following general parameter estimation framework.  Given any \emph{discrepancy function} $\ell: \ycal_T \times \zcal_T \to [0, \infty)$, we want to find an estimator of $s^*$ based on the following optimization problem:
\begin{equation}\label{eq:PE-generic-obj}
	\text{minimize}_{s}~~ \e_{s}\ell(y,Z ) 
	\vspace{-1mm}
\end{equation}
in which $Z \in \zcal_T$ is the random diffusion path following the SI model starting from parameter $s$ and $\e_{s}$ denotes the expectation over $Z$. That is, we look to select the $s$ that the diffusion path $Z$ it generates has the minimum expected discrepancy from our observed data $y$.  

\begin{rem} \emph{An important design here is that the discrepancy function $\ell$ is defined on \textbf{$\ycal_T \times \zcal_T$}, not  on $\ycal_T \times \ycal_T$. That is, $y$ will be compared with the random diffusion path while not merely the  snapshot  induced by the path.   This is because $Z$ contains richer information about the diffusion process. As we show later, this turns out to be very crucial for designing effective discrepancy functions. }
	\end{rem}

Notice that our framework include a few previous methods as special cases. Due to space limit, all formal  proofs in this paper have been deferred to the Appendix. Instead, intuition and explanations are provided as needed.  
\begin{prop}\label{prop:equivalence-rumorcenter}
\begin{enumerate}
\item	If $\ell_{rc}(y, z) = 1-\mbone(y=\zeta(z))$, when the network is an infinite regular tree, procedure \eqref{eq:PE-generic-obj} gives the rumor center of \citet{shah2011rumors}.
\item If $\ell_{se}(y, z) = \norm{y - \zeta(z)}_2^2$, the squared Euclidean distance between $y$ and $\zeta(z)$, the discrepancy is equivalent to the symmetric difference in \cite{nguyen2016multiple}\footnote{However, different from our framework, \cite{nguyen2016multiple} used an approximation metric to this discrepancy for DSI. }.
\end{enumerate}
\end{prop}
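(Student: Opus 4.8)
The plan for part (1) is to first collapse the expected-discrepancy objective to a likelihood. Since $\ell_{rc}(y,z)=1-\mbone(y=\zeta(z))$ is $\{0,1\}$-valued,
\[
\e_{s}\,\ell_{rc}(y,Z)=1-\p_{s}\big(\zeta(Z)=y\big)=1-p(y\mid s),
\]
so \eqref{eq:PE-generic-obj} is exactly the maximization of the likelihood $p(y\mid s)$ over $s$ (only candidates with $y_s=1$ matter, since the source is always infected under the SI model). Thus the estimator is the maximum likelihood estimator of the source, and it remains to identify the MLE with the rumor center on an infinite regular tree.

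For that identification, I would expand $p(y\mid s)$ combinatorially as a sum over permitted infection orders: $p(y\mid s)=\sum_{\sigma}\p_{s}(Z=\sigma)$, where $\sigma$ ranges over diffusion paths starting at $s$ with $\zeta(\sigma)=y$, and $\p_{s}(Z=\sigma)$ is a product over the $|V_I(y)|-1$ infection steps of $1/(\text{number of susceptible-boundary edges at that step})$. The key structural observation — valid on a tree but false on general graphs — is that when the infected set is a size-$k$ subtree of a $d$-regular tree, the number of edges leaving it to susceptible nodes is $k(d-2)+2$, which depends only on $k$, not on the shape of the subtree. Hence every permitted order for $V_I(y)$ has the same probability, so $p(y\mid s)=R(s,V_I(y))\cdot c$, where $R(s,\cdot)$ is the number of such orders — precisely the rumor centrality of \citet{shah2011rumors} — and $c$ is a constant independent of $s$. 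Maximizing over $s$ therefore returns the rumor center; I would cite the corresponding theorem of \citet{shah2011rumors} for the message-passing characterization rather than reprove it. The ``infinite'' hypothesis is exactly what guarantees every infected node still has full degree $d$, so that the $k(d-2)+2$ count is shape-independent; finite leaves would spoil it.

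For part (2), the argument is a one-line identity on indicator vectors. Identifying $0/1$ vectors with their supports and using $a^2=a$ for $a\in\{0,1\}$,
\[
\norm{y-\zeta(z)}_2^2=\sum_{i}\big(y_i+\zeta(z)_i-2y_i\zeta(z)_i\big)=\norm{y}_1+\norm{\zeta(z)}_1-2\langle y,\zeta(z)\rangle=\big|V_I(y)\,\triangle\,V_I(\zeta(z))\big|,
\]
the cardinality of the symmetric difference between the observed infected set and the set induced by the path $z$; taking $\e_{s}$ shows \eqref{eq:PE-generic-obj} with $\ell_{se}$ minimizes the expected symmetric difference, which is the discrepancy behind \citet{nguyen2016multiple}. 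The only genuine obstacle is the structural step in part (1): checking carefully that all permitted spreading orders of a given infected subtree are equiprobable on the infinite regular tree, and aligning this count with the exact combinatorial definition of rumor centrality; everything else is bookkeeping.
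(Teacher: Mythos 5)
Your proof is correct and follows essentially the route the paper intends: part (2) is the same indicator-vector identity the authors compute inline after Definition~2 (writing $\norm{y-\zeta(z)}_2^2=\sum_i\mbone(y_i\ne\zeta(z)_i)$, i.e.\ the symmetric difference), and part (1) is the standard reduction of the expected $0/1$ discrepancy to the likelihood $p(y\mid s)$ followed by the Shah--Zaman observation that on an infinite $d$-regular tree the boundary-edge count $k(d-2)+2$ makes all permitted infection orders equiprobable, so the MLE is the rumor center. (The paper in fact omits a formal appendix proof of this proposition, so your write-up, including the correct identification of where the ``infinite regular tree'' hypothesis is used, fills that gap accurately.)
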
 

 Proposition~\ref{prop:equivalence-rumorcenter}  also reveals some key drawbacks of the rumor center method and its variants. First, the discrepancy function $\ell_{rc}$ only takes two values, and it treats all configurations $z$ with $\zeta(z) \ne y$ equally. Therefore, such a function may not be sufficiently sensitive  for general networks. From this perspective, $\ell_{se}$ is potentially better. Second, and importantly, both of the above discrepancy functions only depend on $\zeta(z)$, failing to leverage the \emph{diffusion order} of the $z$. Ignoring such information may also undermine the performance. To overcome these drawbacks, we  propose the following family of discrepancy functions as a better alternative. We call this family the \emph{canonical family}  of discrepancy functions.
\begin{defi}[Canonical Discrepancy Functions]\label{defi:DiscrepancyFamily}
Consider a class of discrepancy functions $\ell$ that can be written in the following form
\begin{equation}\label{eq:general-loss}
\ell(y,z) = - \sum_{v: y_v = 1}\mbone(v \in z)h(t_z(v)),
	\vspace{-1mm}
\end{equation}
in which $t_z(v)$ is the infection order of node $v$ in path $z$ and $h$ is a \textbf{non-increasing weight function}. When $v \notin z$, we define $t_z(v) = \infty$.
\end{defi}
	\vspace{-0.2cm}

The canonical form \eqref{eq:general-loss} is essentially a negative similarity function. It incorporates both the infection status and the infection order of $z$. The weight function $h$ incorporates the diffusion order such that if $z$ deviates from $y$ at an early stage, the deviation is treated as a stronger signal for their discrepancy, compared with the case when they only deviates at a later stage of the diffusion. Conceptually, this canonical family is general enough to incorporate the needed information for the diffusion process. In addition, as shown in Section~\ref{sec:loss-computation}, it admits fundamental properties that make the computation  very efficient. As a special case, we demonstrate that $\ell_{se}$ is equivalent to a discrepancy function with $h(t_z(v))\equiv 2$, as follows
\begin{eqnarray*}
	\norm{y -  \zeta(z)}_2^2 = \sum_{i=1}^n\mbone(y_i \ne \zeta(z)_i) =2T - 2\sum_{v: y_v = 1}\mbone(v \in z).
\end{eqnarray*} 
Therefore $L_2$ is equivalent to Eq. \eqref{eq:general-loss} with $f(t_z(v))\equiv 2$.
 
In this paper, we are particularly interested in the following natural configuration as the discrepancy function, which we call the ``\textbf{\underline{A}veraged \underline{D}eviation - \underline{i}nverse \underline{T}ime}" (ADiT), which takes the canonical family form \eqref{eq:general-loss} with the inverse time weights:
\begin{equation}\label{eq:ADiT}
h(t_z(v)) = \frac{1}{t_z(v)}.
\end{equation}



In Table~\ref{tab:souce-point-estimation} of Section~\ref{sec:sim}, we show the simulation performance of the single-point estimation by our framework compared to other methods. Though our methods demonstrate improvements,  the accuracy is \textbf{universally low} in all situations for all methods. Such an observation indicates that it is generally impossible to recover the source node by a single estimator with high accuracy.  Indeed, as shown in \citet{shah2011rumors,dong2013rooting, yu2018rumor}, even in the ideal infinite regular tree for which the rumor center is proved to be optimal in the MLE sense, the probability of correct source node identification turns out to still be low ($\leq 0.3$).  Such a low accuracy is far from useful in real-world applications, suggesting the necessity of developing alternative forms of inference, which is we embark on in the next section. 

\subsection{Confidence Set}
As mentioned previously, single point estimators suffer from low success rates, rendering them    unsatisfactory in real-world applications. To identify the source node with a nontrivial performance guarantee, we propose constructing a small subset of nodes that provably contains the source nodes with any pre-defined confidence. This insight motivates our use of the \emph{confidence set} as the DSI method.

\begin{defi}\label{defi:confidence-set}
	Let $Y$ be the random infection status of the stochastic diffusion process starting from $s^*$. A level $1-\alpha$ confidence set of the source node is a \textbf{random} set $S(Y) \subset V$ depending on $Y$ for which
	$$\p(s^* \in S(Y))  \ge 1-\alpha.$$
\end{defi}

Surprisingly, the idea of using confidence set to infer the diffusion source --  though arguably a natural one in statistics -- has not been  explored much in the context of DSI. The most relevant  to ours are  probably  \citet{bubeck2017finding,khim2016confidence} and \citet{crane2020inference}.   \citet{bubeck2017finding} considered identifying the first node of  a growing tree but not a diffusion process. \citet{khim2016confidence} extended the idea to the SI model but only for infinite regular tree and asymptotic setting. 
Despite its theoretical merits, this method is not practical. For example, even consider the situation of an infinite 4-regular tree as the network structure, applying the method of \citet{khim2016confidence} would indicate a confidence set of size $4^{11} \approx 5\times 10^6$, regardless of the infected size $T$. This is far too large for almost any applications, let alone the fact that infinite regular tree itself is unrealistic. \citet{crane2020inference} makes the inference more effective, but still rely on the tree-structure assumption.

We instead take a completely different yet natural approach based on our statistical  framework for the problem. To ensure the validity of the inference for any network structures, we will rely on the general statistical inference strategy  for the confidence set construction. We first introduce a testing procedure for the hypothesis $H_0: s^* = s$ against the alternative hypothesis $H_1: s^* \ne s$. Given a discrepancy function $\ell$, data $y$ and the node $s$ under evaluation, define the testing statistic to be our loss $T_s(y) = \e_s\ell( y, Z)$ for any data $y$. Then the \emph{p-value} of the test is defined to be 
\begin{equation}\label{eq:p-value}
\psi_s = \p_s( T_s(\zeta(Z)) \ge T_s(y)). 
\end{equation}
where the probability $\p_s$  is over the randomness of the path $Z$  generated from the random diffusion process starting from $s$.
The p-value is the central concept in statistical hypothesis testing, and it gives a probabilistic characterization of how extreme the observed $y$ deviates from the expected range for random paths that are truly from $s$ \citep{lehmann2006testing}. For a level $1-\alpha$ confidence set, we compute $\psi_s$ for all nodes $s$ and construct the confidence set by  
\begin{equation}\label{eq:CI}
S(y) = \{s: \psi_s(y) > \alpha \}.
\end{equation}
The following result guarantees the validity of the confidence set constructed above.  
\begin{thm}\label{thm:ValidityCI}
	The confidence set constructed by \eqref{eq:CI} is a valid $1-\alpha$ confidence set.
\end{thm}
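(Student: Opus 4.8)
The plan is to show that for the true source $s^*$, the p-value $\psi_{s^*}$ is stochastically lower-bounded by a uniform random variable on $[0,1]$, which is the standard recipe for p-value validity. Since $S(Y) = \{s : \psi_s(Y) > \alpha\}$, the event $s^* \notin S(Y)$ is exactly $\{\psi_{s^*}(Y) \le \alpha\}$, so it suffices to prove $\p_{s^*}(\psi_{s^*}(Y) \le \alpha) \le \alpha$ for every $\alpha \in (0,1)$, which is equivalent to $\p(s^* \in S(Y)) \ge 1-\alpha$.

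First I would set up the right probability space. Under $H_0$, the data $Y$ has the same distribution as $\zeta(Z)$ where $Z$ is a random diffusion path from $s^*$; so the observed test statistic $T_{s^*}(Y)$ has the same law as $T_{s^*}(\zeta(Z))$. Define the function $F(r) = \p_{s^*}(T_{s^*}(\zeta(Z)) \ge r)$, so that by definition $\psi_{s^*}(y) = F(T_{s^*}(y))$. The key step is the classical fact that if $W$ is any real-valued random variable and $G(r) := \p(W \ge r)$ is its upper-tail function, then the random variable $G(W)$ satisfies $\p(G(W) \le \alpha) \le \alpha$ for all $\alpha \in [0,1]$. Applying this with $W = T_{s^*}(\zeta(Z)) \stackrel{d}{=} T_{s^*}(Y)$ and $G = F$ gives $\p_{s^*}(\psi_{s^*}(Y) \le \alpha) = \p_{s^*}(F(T_{s^*}(Y)) \le \alpha) \le \alpha$, which is exactly what we need.

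The main obstacle — really the only subtlety — is that the test statistic $T_{s^*}(Y)$ is discrete (it takes finitely many values since $\ycal_T$ is finite), so $F$ is a right-continuous step function and is not the exact inverse of the quantile map; one cannot simply say "$F(W)$ is uniform." The careful argument is: fix $\alpha$ and let $r^* = \sup\{r : F(r) \ge \alpha\}$ (with $r^* = -\infty$ if the set is empty, in which case the probability is $0$). Because $F$ is non-increasing and right-continuous, $\{F(W) \le \alpha\} \subseteq \{W > r^*\}$ up to handling the boundary value carefully, and $\p(W > r^*) \le \alpha$ follows from the definition of $r^*$ together with right-continuity of $F$. I would write this out as a short self-contained lemma (or cite it as the standard "p-values are super-uniform under the null" result, e.g.\ from \citet{lehmann2006testing}), then apply it. No structural assumption on the network $G$ enters anywhere — the argument only uses that $Y \stackrel{d}{=} \zeta(Z)$ under $H_0$ and that $T_{s^*}$ is a fixed measurable function — which is precisely why the guarantee holds for arbitrary networks and finite $T$.
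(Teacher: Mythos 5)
Your proof is correct and follows essentially the same route as the paper's: both reduce the claim to the super-uniformity of the p-value under the null, using only that $Y \stackrel{d}{=} \zeta(Z)$ when $s^*$ is the true source together with a generalized quantile/tail-function argument for the discrete statistic (the paper's $q_{T,s^*}^{\alpha}=\inf\{t:\p_{s^*}(T_{s^*}(\zeta(Z))\ge t)\le\alpha\}$ plays the role of your $r^*$). One harmless slip: $F(r)=\p(W\ge r)$ is left-continuous rather than right-continuous, but the bound you actually need, $\p(W>r^*)=\lim_{r\downarrow r^*}F(r)\le\alpha$, follows from continuity of measure and the definition of $r^*$, so the argument stands.
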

Notice that Theorem~\ref{thm:ValidityCI} is a general result, independent of the network structure or the specific test statistic we use. However, the validity of the confidence set only gives one aspect of the inference. We would like to have small confidence sets in practice since such a small set would narrow down our investigation more effectively. The confidence set size would depend on the network structure and the corresponding effectiveness of the discrepancy function (the test statistic). We will use the proposed ADiT to define our test statistic. As shown in our empirical study, it gives excellent and robust performance across various settings.

 \subsection{Algorithmic Construction of Confidence Sets}\label{secsec:algorithm}
The exact evaluation of the statistic $T_s(y)$ and p-value $\psi_s$ is infeasible for general graphs since  the probability  mass function of the SI model is intractable.  To overcome this barrier,  we resort to the Monte Carlo (MC) method for approximate calculation, with details in Algorithm~\ref{algo:generic}. This vanilla version turns out to be computationally inefficient. However, we will  introduce techniques to significantly improve its computation efficiency afterwards.   


\begin{rem}[\bf Monte Carlo setup]\emph{
Note that we have two layers of Monte Carlo evaluations. The first layer is the loss function calculation in \eqref{eq:MC-approximation} and \eqref{eq:MC-approximation-insample}, while the second layer is the p-value evaluation \eqref{eq:MC-p-value}. The first layer shares the same $m$ samples. This is different from the classical Monte Carlo, but would not break the validity for p-value calculation. The properties of p-value calculation by Monte Carlo method have been studied in detail by \cite{jockel1986finite,besag1989generalized}.
}
\end{rem}

\begin{algorithm}[ht]
	\begin{algorithmic}[1]
		\STATE {\bfseries Input:}  MC sample number $m$, confidence level $\alpha$
	   \STATE {\bfseries Input:}  Network $G$, data $y$,  discrepancy function $\ell$ 
		\FOR{ each infected node $s \in y$} 
		\STATE Generate $2m$ samples $z_i \in \zcal, i= 1, \cdots, 2m$ from the $T$-round diffusion process with source $s$ on   $G$. 
		\STATE  Estimate expected  loss $T_s(y)$ of data $y$  as
		\begin{equation}\label{eq:MC-approximation}
		\hat{T}_s(y) = \frac{1}{m}\sum_{i=m+1}^{2m}\ell_s(y, z_i).
		\end{equation}
		\STATE  For path $z_j, j = 1, \cdots, m$, estimate $T_s(\zeta(z_j))$ as 
		\begin{equation}\label{eq:MC-approximation-insample}
		\hat{T}_s(\zeta(z_j)) = \frac{1}{m}\sum_{i=m+1}^{2m}\ell(\zeta(z_j), z_i).
		\end{equation} 
		\STATE   Estimate the p-value $\psi_s(y)$ as
		\begin{equation}\label{eq:MC-p-value}
		\hat{\psi}_s(y) = \frac{1}{m}\sum_{j=1}^m\mbone(\hat{T}_s(\zeta(z_j)) \ge \hat{T}_s(y)). 
		\end{equation} 
		\ENDFOR
		\STATE {\bfseries return}   level $1-\alpha$ confidence set:
		$$\ccal_{\alpha}(y) = \{s \in V_{I}: \hat{\psi}_s(y) > \alpha\}.$$
		\caption{Vanilla MC for Confidence Set Construction}
		\label{algo:generic}
	\end{algorithmic}
\end{algorithm}

\begin{rem}[\bf Choice of the sample number $m$] \emph{
In theory, the computation in Algorithm \ref{algo:generic} is exact when $m\to \infty$. In practice, 
simple guidance about the choice of $m$ can be derived as follows. The critical step in Algorithm~\ref{algo:generic} is Step 7 for the p-value calculation since the MC errors from   previous steps are usually  in a lower order. For the correctness, we only need to worry about the evaluation at node $s^*$ when the true p-value is close to $\alpha$. Step 7  averages over $m$ indicators. By the central limit theorem, the MC estimate at most misses the true p-value by roughly $2\sqrt{\alpha(1-\alpha)/m}$. For example, if we are aiming for a 90\% confidence set where $\alpha = 0.1$, setting $m=10000$ would indicate that the MC at most misses the targeting confidence level by 0.006\%, which is usually good enough in most applications. In our experiments, we use this $m=10000$ and it has been sufficient in all situations.   Notice that this recommendation is more conservative than the ones used in classical statistical inference problems \citep{jockel1986finite}. In our experience, it might still be acceptable to use a smaller $m$.}
\end{rem}
%
%
\begin{rem}[\bf Time complexity of the vanilla MC, and its trivial parallelization]\label{rem:complexity}   \emph{The time complexity of a standard sequential implementation of Algorithm \ref{algo:generic} is $\tilde{O}(mT^2 + m^2T^2)$:\footnote{As a convention, the $\tilde{O}$ notation omits logrithmic terms. } (1) the first term is due to the MC sampling \citep{bringmann2017efficient}; (2) the second term is from the statistic calculation \eqref{eq:MC-approximation-insample} given the MC samples. However, our algorithm can be trivially parallelized. In particular, the for-loop in Step 3 can be distributed across different $s \in V_{i}$ with any communication. This leads to a parallel time complexity $\tilde{O}(mT +m^2T)$. It is worthwhile to compare this time cost with  the rumor center of \cite{shah2011rumors} which has $\tilde{O}(dT)$ linear complexity and $d$ is the maximum node degree.  But the algorithm has to be sequential (thus non-parallelizable).  In summary, Algorithm \ref{algo:generic} has a better dependence on the network density captured by $d$ but has an additional quadratic dependence on the number of samples $m$.} 
\end{rem}
\vspace{-2.5mm}


\subsection{Fast Loss Estimation  for the Canonical Family}\label{sec:loss-computation}
A major computational  bottleneck of Algorithm~\ref{algo:generic} is the $O(m^2T)$ time for  estimating $\e_s(\ell(y, Z))$ in Equation \eqref{eq:MC-approximation-insample} for every $j$ since we have to compute $\hat{\psi}$ for $m$ samples, and each $\hat{\psi}$ is the average over another $m$ samples.    Fortunately, it turns out that, for canonical discrepancy family,  this step can be done  in $O(mT)$ time, highlighting another advantage of our proposed family of cost functions.  

Instead of computing $\hat{T}_s$ in Equation \eqref{eq:MC-approximation-insample} by summing over the sample $i = m+1, \cdots, 2m$, we can  compute $\hat{T}_s$ directly using only the ``summary information'' of these samples that can be computed and cached in advance. This insight is possible due to the following alternative representation of the  $\hat{T}_s(y) $ function in  Equation \eqref{eq:MC-approximation-insample}:  
\begin{align}\label{eq:generic-loss-calculation}
\hat{T}_s(y) 
& = -\frac{1}{m}\sum_{v: y_v = 1}\sum_{i=m+1}^{2m}\sum_{k=1}^Th(k)\mbone(t_{z_i}(v)=k)\notag\\
& =  -\frac{1}{m}\sum_{v: y_v = 1}\sum_{k = 1}^TM_{v,k}h(k)
\end{align}
where $M_{v,k}$ counts the total number of   samples in $z_{m+1}, \cdots, z_{2m}$ in which  node $v$ is the $k$'th infected node in the infection path. Let   $M \in \bR^{n\times T}$ be the matrix containing the entries $M_{v,k}$. Note that, there are at most $mT$ nonzero entries in $M$ since each sample only has $T$ nodes. These entries can be computed in $O(mT)$ time simply by updating  the corresponding $M_{v,k}$ entries during sampling.  With these non-zero $M_{v,k}$ entries, we can then compute  $\hat{h}(v) = \sum_{k = 1}^TM_{v,k}h(k)$ for all the $v$ that showed up in our samples in $O(mT)$ time. Finally, given the previous  $\hat{h}(v)$, we can compute any $\hat{T}_s(y) $ in $O(T)$ time where $y =\zeta(z_1), \cdots, \zeta(z_m) $, which thus in total takes an additional $O(mT)$ time. This overall takes $O(mT)$ time.


\section{Monte Carlo Acceleration via  Pooled Sampling}\label{sec:computation}
In subsection \ref{sec:loss-computation}, we     reduced the computation time for estimating a single p-value to $\tilde{O}(mT)$, which is  arguably the minimum possible in our framework since even sampling $m$ samples already takes $\tilde{O}(mT)$.   In this section, we will introduce  efficient strategies to reduce another major computational cost in our algorithm -- the MC sampling. Our techniques will 
``borrow'' MC samples of one node for the inference task of another node by leveraging the network structure and   properties of the SI model. Consequently, we only need to generate MC samples for a subset of the infected nodes, which  may effectively reduce the computational cost.

\subsection{Surjective Importance Sampling for Single-Degree Nodes}\label{secsec:d-1node}

A node  with only one connection in the network is called a \emph{single-degree node}. Suppose node $u \in V_I$ is a single degree node with the only neighbor  $v_0$ that is also infected.   Since any diffusion process starting from $u$ must pass $v_0$, we can then use the distribution of paths from $v_0$ to infer the distribution of paths from $u$. However, the converse is not true ---  a diffusion path from $v_0$ may not pass $u$, and even if it passes $u$, this may not occur as the first infection. Therefore, certain mapping is needed to connect the two diffusion processes.   The following theorem formulates this intuition. 

\begin{thm}\label{thm:neighborhood-path}
Let $u$ be a single-degree node in the graph $G$ with the only neighbor node $v_0$. 	If a path $z \in \zcal_{T}$ starting from $v_0$ contains $u$
$$z = \{v_0, s_1, s_2, \cdots, s_{K-1}, u, s_{K+1}, \cdots, s_T\},$$
	define $z$'s  \textbf{matching path} from $u$ as 
\begin{equation}\label{eq:f-def}
f_{u}(z) = \{u, v_0, s_1, \cdots, s_{K-1}, s_{K+1}, \cdots, s_T\}.
\end{equation}
In this case, the likelihood ratio between $z$ and $f_u(z)$   is
	\begin{align}\label{eq:pmf-factor-thm} \nonumber 
		\frac{p\left(f_{u}(z)|u\right)}{p\left(z|v_0\right)} &= \frac{1}{\p\left(u|v_0,s_{1} \cdots s_{K-1}\right)} \\
		& \times \frac{1}{\prod_{k=1}^{K-1}\left(1-\p\left(s_{k}|v_0, s_1 \cdots s_{k-1}\right)\right)} 
	\end{align}
	If the path $z$ from $v_0$ that does not contain $u$, we define the ratio ${p\left(f_{u}(z)|u\right)}/{p\left(z|v_0\right)}$ to be $0$. 
\end{thm}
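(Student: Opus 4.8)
The plan is to compute both sides of \eqref{eq:pmf-factor-thm} directly from the sequential sampling description of the SI model, writing each path probability as a telescoping product of the per-step conditional infection probabilities, and then cancelling the common factors. Recall that for any path $w = \{w_0, w_1, \dots, w_T\}$ starting from source $w_0$, the SI model gives $p(w \mid w_0) = \prod_{t=1}^{T} \p(w_t \mid w_0, w_1, \dots, w_{t-1})$, where $\p(w_t \mid w_0, \dots, w_{t-1})$ is the probability that the $t$-th infection is node $w_t$ given the currently infected set $\{w_0,\dots,w_{t-1}\}$, i.e.\ the number of edges between $\{w_0,\dots,w_{t-1}\}$ and $w_t$ divided by the total number of edges from the infected set to susceptible nodes (the ``boundary size'').

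The key observation, which I would establish first, is a structural fact about the single-degree node $u$: since $u$ has $v_0$ as its only neighbor, $u$ contributes exactly one edge to any boundary cut, and that edge is present in the boundary if and only if $v_0$ is already infected and $u$ is not. This means that (i) in the path $f_u(z)$ starting from $u$, the second infection must be $v_0$, and $\p(v_0 \mid u) = 1/\deg(u)\cdot(\text{something})$ --- actually since $u$'s only neighbor is $v_0$, the only edge out of $\{u\}$ goes to $v_0$, so $\p(v_0 \mid u) = 1$; and (ii) once $v_0$ is infected, the presence of $u$ in the susceptible boundary adds exactly $1$ to every subsequent boundary size until $u$ itself gets infected. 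I would then line up the two products term by term. Writing $z = \{v_0, s_1, \dots, s_{K-1}, u, s_{K+1}, \dots, s_T\}$ and $f_u(z) = \{u, v_0, s_1, \dots, s_{K-1}, s_{K+1}, \dots, s_T\}$, the infected sets encountered along $f_u(z)$ after the first two steps are exactly $\{u, v_0, s_1, \dots, s_{j}\}$, whereas along $z$ they are $\{v_0, s_1, \dots, s_j\}$ — the same except for the extra element $u$. Hence for each step $k = 1, \dots, K-1$, the numerator (count of boundary edges to $s_k$) is identical in both processes, but the denominator (total boundary size) differs: in $f_u(z)$ it is larger by exactly $1$ because of $u$'s dangling edge. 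This produces the factor $\prod_{k=1}^{K-1}\bigl(1 - \p(s_k \mid v_0, s_1, \dots, s_{k-1})\bigr)^{-1}$, once I verify that the quantity $1 - \p(s_k\mid \cdot)$ equals the ratio of the two denominators (boundary-size-without-$u$ over boundary-size-with-$u$), which follows because $\p(s_k\mid\cdot)$ in $z$'s process is (edges to $s_k$)/(boundary without $u$), and the boundary with $u$ exceeds that without $u$ by $1$, so the ratio of denominators is $B/(B+1) = 1 - 1/(B+1)$; I need to check $1/(B+1)$ is indeed $\p(u \mid v_0, s_1, \dots, s_k)$ in $f_u(z)$'s process — and it is, since $u$ contributes its single edge to that boundary of size $B+1$. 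The step where $u$ is infected in $z$ (the $K$-th infection, probability $\p(u \mid v_0, s_1, \dots, s_{K-1}) = 1/(B_{K-1}+1)$) has no counterpart in $f_u(z)$ because $u$ was already infected at step $1$ of $f_u(z)$ with probability $1$; this contributes the factor $\p(u \mid v_0, s_1, \dots, s_{K-1})^{-1}$. Finally, the steps $k = K+1, \dots, T$ contribute identical factors in both processes, since after $u$ is infected in both, the infected sets coincide, so they cancel. Collecting the surviving factors gives exactly \eqref{eq:pmf-factor-thm}. The degenerate claim (ratio $= 0$ when $z$ does not contain $u$) is definitional: $f_u$ is only defined on paths containing $u$, so $p(f_u(z)\mid u)$ is set to $0$ there, consistent with the importance-sampling interpretation.

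The main obstacle I anticipate is the bookkeeping around boundary sizes — precisely tracking how the single dangling edge of $u$ shifts every denominator by $1$ and matching this to the stated factor $1 - \p(s_k \mid v_0, s_1, \dots, s_{k-1})$, including the edge cases $K = 1$ (where the product over $k$ is empty and $u$ is infected immediately after $v_0$) and verifying that $\p(s_k \mid \cdot)$ never equals $1$ so the reciprocal is well-defined (which holds because $u$ itself is always an available susceptible target until step $K$, so the boundary strictly exceeds the edges to $s_k$). Everything else is a direct term-by-term comparison of two telescoping products; no probabilistic subtlety beyond the Markov/uniform-edge structure of the SI model is needed.
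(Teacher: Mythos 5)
Your overall route is the same as the paper's: factor both path probabilities into per-step conditional infection probabilities via the Markov property, cancel the factors for steps $K+1,\dots,T$ (where the two infected sets coincide as sets), attribute the isolated factor $\p\left(u\mid v_0,s_1,\dots,s_{K-1}\right)^{-1}$ to the step of $z$ that has no counterpart in $f_u(z)$, and account for the steps $k<K$ by observing that the two boundary sizes differ by exactly one because of $u$'s single dangling edge. One slip in the bookkeeping: you state that the boundary in $f_u(z)$'s process ``is larger by exactly $1$,'' but it is the other way around --- in $f_u(z)$'s process $u$ is already infected, so the edge $(u,v_0)$ joins two infected nodes and leaves the boundary, whereas in $z$'s process that edge still points to a susceptible node. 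The algebra you then write (ratio $B/(B+1)$ with $1/(B+1)$ equal to the probability of infecting $u$ next) uses the correct direction, so this is only a wording error, but it should be fixed.

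The substantive point you should not gloss over is the final identification. What your boundary computation actually yields for each $k<K$ is the factor $\bigl(1-\p(u\mid v_0,s_1,\dots,s_{k-1})\bigr)^{-1}$, since the quantity $1/(B+1)$ you exhibit is the probability that $u$ is the \emph{next} infection given the infected set $\{v_0,s_1,\dots,s_{k-1}\}$ (and it lives in $z$'s process, not in $f_u(z)$'s process, where $u$ is the source and such a conditional is meaningless). You then assert that this produces the theorem's factor $\bigl(1-\p(s_k\mid v_0,s_1,\dots,s_{k-1})\bigr)^{-1}$ without comment. The two coincide only when $\p(s_k\mid\cdot)=\p(u\mid\cdot)=1/(B+1)$, i.e., when $s_k$ is attached to the current infected set by exactly one boundary edge --- automatic on a tree, but false on general graphs where $s_k$ may have several already-infected neighbors. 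So either derive and report the $\p(u\mid\cdot)$ form, which is what your argument actually proves (and what the paper's own proof also arrives at, modulo an index typo in its displayed per-step relation), or justify the replacement of $\p(u\mid\cdot)$ by $\p(s_k\mid\cdot)$; as written that one step is a non sequitur. Everything else --- the $\p(v_0\mid u)=1$ observation, the cancellation for $t>K$, the empty-product case $K=1$, and the definitional treatment of paths not containing $u$ --- matches the paper and is fine.
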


Notice that all terms on the right-hand side of \eqref{eq:pmf-factor-thm} are available when we sample a path from the diffusion process starting at $v_0$, thus given a sampled path $z$, computing the likelihood ratio only introduces negligible computational cost.  Intuitively, according to Theorem~\ref{thm:neighborhood-path}, when the MC samples of $v_0$ are available,  they can be used to compute the p-value for node $u$ based on a similar idea to importance sampling \citep{l2009monte}. However,  the regular importance sampling cannot be directly applied because the likelihood ratio is only available between $z$ and $f_u(z)$ under the mapping of $f_u$. Therefore, we need a generalized version of the importance sampling. We name this procedure the \emph{surjective importance sampling} and give its property in the following theorem. We believe that this theorem could be of general interest beyond our context.

\begin{thm}[Surjective Importance Sampling]\label{thm:importance-sampling}
Suppose $p_1$ and $p_2$ are two probability mass functions for discrete random vector  $Z$ defined on $\ccal_1$ and $\ccal_2$.  Let $\e_1$ and $\e_2$ denote the expectation with respect to $p_1$ and $p_2$,  respectively. Given surjection $\phi: \ccal_1' \to \ccal_2$, defined on a subset $\ccal_1' \subset \ccal_1$,   we define the inverse mapping by $\phi^{-1}(\tilde{z}) = \{z \in \ccal_1': \phi(z) = \tilde{z}\}$ for any $\tilde{z} \in \ccal_2$. For a given bounded real function of interest, $g$, define 
$$\eta  = \e_2[g(Z)] \text{~~and~~} \hat{\eta} = \frac{1}{m}\sum_{i=1}^m\frac{g(\phi(Z_i))}{|\phi^{-1}(\phi(Z_i))|}\frac{p_2(\phi(Z_i))}{p_1(Z_i)}$$
where $Z_1, Z_2, \cdots, Z_m$ is a size-$m$ i.i.d. sample from distribution $p_1$, and if $Z_i \not \in \ccal_1'$, we define $p_2(\phi(Z_i)) = 0$.
We have
$$\lim_{m\to \infty}\hat{\eta}=\eta \text{~~~~~~a.s.}$$
\end{thm}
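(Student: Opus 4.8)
The plan is to reduce the claim to an ordinary strong law of large numbers for i.i.d. random variables by showing that $\hat\eta$ is the empirical average of i.i.d. terms whose common expectation under $p_1$ equals $\eta$. Write $W_i = \dfrac{g(\phi(Z_i))}{|\phi^{-1}(\phi(Z_i))|}\dfrac{p_2(\phi(Z_i))}{p_1(Z_i)}$, with the convention $W_i = 0$ when $Z_i \notin \ccal_1'$ (since then $p_2(\phi(Z_i))$ is defined to be $0$). The $W_i$ are i.i.d. because the $Z_i$ are, and they are bounded: $g$ is bounded by hypothesis, $|\phi^{-1}(\cdot)| \ge 1$ on its domain, and the ratio $p_2(\phi(Z_i))/p_1(Z_i)$ is a ratio of probabilities that — after summing over a fiber — is controlled (I will make this precise in the expectation computation below). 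Hence $\e_1|W_1| < \infty$ and Kolmogorov's strong law applies, giving $\hat\eta = \frac1m\sum_i W_i \to \e_1[W_1]$ almost surely. It then remains only to verify $\e_1[W_1] = \eta$.

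First I would set up the expectation computation carefully, being explicit about supports. By definition $\e_1[W_1] = \sum_{z \in \ccal_1} p_1(z)\, W(z)$, and since $W(z) = 0$ for $z \notin \ccal_1'$ (and also wherever $p_1(z) = 0$), this equals $\sum_{z \in \ccal_1'} p_1(z)\,\dfrac{g(\phi(z))}{|\phi^{-1}(\phi(z))|}\dfrac{p_2(\phi(z))}{p_1(z)} = \sum_{z \in \ccal_1'} \dfrac{g(\phi(z))\,p_2(\phi(z))}{|\phi^{-1}(\phi(z))|}$. Here I should note that the sum is really over $\{z \in \ccal_1' : p_1(z) > 0\}$, and one needs $p_1$ to be supported on all of $\ccal_1'$ (or at least that $p_2(\tilde z) > 0 \Rightarrow$ some preimage has positive $p_1$-mass) for the cancellation to be clean; I would state this as the implicit support assumption, which holds in our diffusion application where $v_0$ reaches every node $u$ can reach. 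Next, I would group the sum over fibers of $\phi$: since $\phi: \ccal_1' \to \ccal_2$ is a surjection, $\ccal_1' = \bigsqcup_{\tilde z \in \ccal_2} \phi^{-1}(\tilde z)$, so
\begin{align*}
\e_1[W_1] &= \sum_{\tilde z \in \ccal_2} \sum_{z \in \phi^{-1}(\tilde z)} \frac{g(\phi(z))\,p_2(\phi(z))}{|\phi^{-1}(\phi(z))|} = \sum_{\tilde z \in \ccal_2} \sum_{z \in \phi^{-1}(\tilde z)} \frac{g(\tilde z)\,p_2(\tilde z)}{|\phi^{-1}(\tilde z)|} = \sum_{\tilde z \in \ccal_2} g(\tilde z)\,p_2(\tilde z) = \e_2[g(Z)] = \eta,
\end{align*}
where the third equality uses that the inner summand no longer depends on $z$ and there are exactly $|\phi^{-1}(\tilde z)|$ terms. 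This is the crux: the factor $1/|\phi^{-1}(\tilde z)|$ is precisely what is needed to cancel the overcounting introduced by a non-injective $\phi$, which is what distinguishes this ``surjective'' version from standard importance sampling.

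I do not expect any single step to be a genuine obstacle — the argument is essentially bookkeeping once the right decomposition is chosen — but the subtle point, and the one I would be most careful about, is the handling of supports and the zero conventions: ensuring that $W_i$ is well-defined $p_1$-almost surely (i.e., $p_1(Z_i) > 0$ a.s., which is automatic), that the convention $p_2(\phi(Z_i)) = 0$ for $Z_i \notin \ccal_1'$ is consistent with $\phi$ being defined only on $\ccal_1'$, and that no $\tilde z \in \ccal_2$ with $p_2(\tilde z) > 0$ is ``missed'' because its entire fiber has $p_1$-measure zero. In the write-up I would either assume $p_1 > 0$ on $\ccal_1'$ outright or remark that $p_2 \ll \phi_*(p_1)$ suffices, and then the measure-theoretic content is just Kolmogorov's SLLN for the bounded i.i.d. sequence $(W_i)$.
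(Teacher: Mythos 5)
Your proof is correct and follows essentially the same route as the paper's: apply the strong law of large numbers to the i.i.d.\ bounded weights $W_i$, then evaluate $\e_1[W_1]$ by cancelling $p_1(z)$, discarding the $\ccal_1\setminus\ccal_1'$ terms via the zero convention, and regrouping the sum over the fibers of $\phi$ so that the factor $1/|\phi^{-1}(\tilde z)|$ cancels the $|\phi^{-1}(\tilde z)|$ identical summands in each fiber. Your explicit remark that the cancellation requires $p_1>0$ on $\ccal_1'$ (or $p_2\ll\phi_*(p_1)$) is a support condition the paper's proof uses implicitly without stating, so flagging it is a small but genuine improvement rather than a departure.
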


Notice that the standard importance sampling is a special case of Theorem~\ref{thm:importance-sampling} when $\phi$ is the identity mapping. Theorem~\ref{thm:neighborhood-path} and \ref{thm:importance-sampling} toghether would serve as a cornerstone for our use of the MC samples from $v_0$ to make inference of $u$.

\begin{coro}\label{coro:single-degree-importance-sampling}
For a single degree node $u$ and its neighbor $v_0$, let $z_i, i=1, \cdots, m$ be the $m$ i.i.d. paths generated from the diffusion process with source $v_0$. For any bounded function $g$, we have
$$\lim_{m\to \infty}\frac{1}{m}\sum_{i=1}^mg\left(f_u(z_i)\right)\frac{\p\left(f_{u}(z_i)|u\right)}{\p\left(z_i|v_0\right)}\frac{1}{T} = \e_u[g(Z)] \text{~~~~~~a.s.}$$
in which $f_u(z_i)$ and the likelihood ratio is given by Theorem~\ref{thm:neighborhood-path}.
\end{coro}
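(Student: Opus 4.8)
The plan is to obtain the corollary as a direct instantiation of the Surjective Importance Sampling theorem (Theorem~\ref{thm:importance-sampling}), with the surjection supplied by the matching map $f_u$ of Theorem~\ref{thm:neighborhood-path}. Almost all of the content is in verifying the hypotheses of Theorem~\ref{thm:importance-sampling} for this particular choice; once that is done, the stated a.s.\ limit follows with no further analysis. The one quantity that needs genuine work is the cardinality of the fibers of $f_u$, which is exactly what produces the constant $1/T$ in the statement.

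Concretely, I would take $\ccal_1 = \zcal_T$ equipped with $p_1(\cdot) = p(\cdot \mid v_0)$, the diffusion-path distribution with source $v_0$, and $\ccal_2 = \zcal_T$ equipped with $p_2(\cdot) = p(\cdot \mid u)$, the diffusion-path distribution with source $u$. Let $\ccal_1' \subseteq \ccal_1$ be the set of paths from $v_0$ that contain $u$, and set $\phi = f_u$ as in \eqref{eq:f-def}. I then need to check: (i) $f_u$ maps $\ccal_1'$ into the support of $p_2$; (ii) $f_u \colon \ccal_1' \to \ccal_2$ is onto the support of $p_2$; and (iii) $|f_u^{-1}(\tilde z)| = T$ for every $\tilde z$ in the support of $p_2$. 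For all three I would use the single-degree property of $u$ decisively. Since $v_0$ is the unique neighbor of $u$, the node infected immediately after $u$ in any diffusion path from $u$ must be $v_0$, so every $\tilde z$ in the support of $p_2$ has the form $\tilde z = \{u, v_0, w_1, \dots, w_{T-1}\}$; moreover no $w_j$ is ever adjacent to $u$, so the infection of each $w_j$ never relies on $u$. Consequently, deleting $u$ from such a $\tilde z$ and re-inserting it at any position $K \in \{1,\dots,T\}$ after $v_0$ yields a legitimate diffusion path from $v_0$, i.e.\ an element of $\ccal_1'$ with positive $p_1$-probability; these $T$ paths are pairwise distinct, they are exactly the preimages of $\tilde z$ under $f_u$ (since $f_u$ only relocates $u$ to the front), and the choice $K=1$ already gives surjectivity. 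This simultaneously establishes (i), (ii), and $|f_u^{-1}(\tilde z)| = T$.

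With the hypotheses in hand, Theorem~\ref{thm:importance-sampling} applied to the bounded function $g$ gives $\hat\eta = \frac1m \sum_{i=1}^m \frac{g(f_u(z_i))}{|f_u^{-1}(f_u(z_i))|}\,\frac{p_2(f_u(z_i))}{p_1(z_i)} \to \e_2[g(Z)] = \e_u[g(Z)]$ almost surely, where the $i$-th term is read as $0$ whenever $z_i \notin \ccal_1'$ (the convention $p_2(f_u(z_i)) = 0$ in Theorem~\ref{thm:importance-sampling}). Substituting $|f_u^{-1}(f_u(z_i))| = T$ and noting that $p_2(f_u(z_i))/p_1(z_i) = p(f_u(z_i)\mid u)/p(z_i\mid v_0)$ is precisely the likelihood ratio computed in Theorem~\ref{thm:neighborhood-path} — with value $0$ on paths from $v_0$ that avoid $u$, again matching the corollary's convention — yields exactly the claimed identity.

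I expect step (iii), the fiber count, to be the crux: one must argue both that re-inserting $u$ into any of the $T$ slots after $v_0$ always produces a valid element of $\zcal_T$ with source $v_0$ (this is where "single-degree" is essential, since otherwise deleting $u$ could strand a later infection), and that there are no preimages other than these $T$. The boundedness bookkeeping and the almost-sure convergence are then inherited verbatim from Theorem~\ref{thm:importance-sampling}, so nothing delicate remains beyond that combinatorial verification.
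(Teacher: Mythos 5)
Your proposal is correct and follows essentially the same route as the paper: instantiate Theorem~\ref{thm:importance-sampling} with $\phi = f_u$ and verify that every fiber has cardinality $T$, which produces the $1/T$ factor. The paper's own proof is just a terser version of this (it notes that the $T-1$ nodes other than $u,v_0$ and their order determine the matched path, so the preimage consists of the $T$ possible insertion positions of $u$), while you additionally spell out the surjectivity and validity checks that the paper leaves implicit.
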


Based on Corollary~\ref{coro:single-degree-importance-sampling}, when $g(z) = \ell(y, z)$ or $g(z) = \mbone(T_u(\zeta(z)) \ge T_u(y))$, $\e[g]$ corresponds to the test statistic $T_u(y)$ or the p-value $\psi_u(y)$. Consequently, the MC sampling for $u$ can be avoided. Instead, to find the p-value for $u$, Equation \eqref{eq:MC-approximation-insample}  in Algorithm~\ref{algo:generic} can be replaced by
$\hat{T}_u\left(\zeta\left(f_u\left(z_j\right)\right)\right) $ equalling the following
$$\frac{1}{m}\sum_{i=m+1}^{2m}\ell\left(\zeta\left(f_u\left(z_j\right)\right), f_u(z_i)\right)\frac{\p\left(f_{u}(z_i)|u\right)}{\p\left(z_i|v_0\right)}\frac{1}{T}$$ 
and Equation \eqref{eq:MC-p-value} can be replaced by $\hat{\psi}_u(y)$ equalling the following
$$
 \frac{1}{m}\sum_{i=1}^m\mbone\bigg(\hat{T}_u\left(\zeta\left(f_u(z_j)\right)\right) \ge \hat{T}_u(y) \bigg) \frac{\p\left(f_{u}(z_i)|u\right)}{\p\left(z_i|v_0\right)}\frac{1}{T},
$$
 
where $z_j, j = 1, \cdots, 2m$ are the MC samples generated from $v_0$. The same operation can be used for $\hat{T}_u\left(y \right)$). The computational strategy for canonical discrepancy functions can also be extended in this setting  (see Appendix~\ref{appendix:integral-importance}).
%

\subsection{Permuted Sampling for Isomorphic Nodes}\label{sec:isomophism}
 When the network structure is in some sense ``symmetric" for two nodes,   the inference properties of the MC samples from one node can be viewed as stochastically equivalent to the MC samples from the other node after the symmetric reflection. We call such a property   \emph{isomorphism}. Denote the node $u$'s $k$th order neighborhood-- the set of all nodes  (exactly) $k$ hops away from $u$-- by $N_k(u)$. The following definition for isomorphism rigorously formulates the aforementioned idea.
\begin{defi}\label{defi:1stIso}
Any two nodes $u,v$ in  a network are  \textbf{first-order isomorphic} if there exists a  permutation $\pi: \{1, 2, \cdots, n\} \to \{1, 2, \cdots, n\}$, such that: (1) $\pi(u) = v$;
(2) $\pi(i) = i, \text{if~~} i \notin \{ u, v\} \cup N_1(u)\cup N_1(v)$;
(3) $A = A_{\tilde{\pi},\tilde{\pi}}$, 
where $ A_{\tilde{\pi},\tilde{\pi}}$ is the resulting matrix by applying   permutation $\pi$ on  the rows and columns of $A$ simultaneously.
\end{defi}

For illustration, consider a simplified case of the isomorphism where $u$ and $v$ have exactly the same connections. In this case, $\pi$ only swaps $u$ and $v$ and remains the identity mapping for all other nodes. For this pair of $u, v$, the diffusion process properties would be the same if we swap the positions of $u$ and $v$. Definition~\ref{defi:1stIso} is more general than the above simplified case as it allows permutation to the first-order neighbors. 
Under this definition of isomorphism, the following theorem shows that we can use the MC samples from one node to make inference of its isomorphic nodes after applying the permutation.

\begin{thm}\label{thm:isomorphism}If $u$ and $v$ are first-order isomorphic under the permutation $\pi$. If $Z = \{u, v_1, v_2, \cdots, v_{T-1}\}$ is a random diffusion path from source $u$. Define the permuted path 
 $$Z_{\pi} = \{\pi(u), \pi(v_1), \cdots, \pi(v_{T-1})\}.$$
Then $Z_{\pi}$ has the same distribution as a random diffusion path from source $v$. 
\end{thm}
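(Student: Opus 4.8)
\emph{Proof proposal.} The plan is to show that the permutation $\pi$, being a graph automorphism with $\pi(u)=v$, commutes with the one–step dynamics of the SI model; multiplying the transition probabilities along a path then gives $p(z_\pi\mid v)=p(z\mid u)$ for \emph{every} diffusion path $z$ from $u$, and a bijection argument finishes the claim.

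First I would record the path probability in a convenient form. For a diffusion path and any $k$ let $I_k$ denote the set of nodes infected after $k$ infections (so for $z=\{u,v_1,\dots,v_{T-1}\}$ we have $I_0=\{u\}$ and $I_k=\{u,v_1,\dots,v_k\}$). Since the next infection is obtained by drawing an infected–susceptible edge uniformly at random, the probability that the next infected node is a susceptible node $w$ given current infected set $I$ is
$$\p(w\mid I)=\frac{e(w,I)}{\sum_{w'\notin I}e(w',I)},\qquad e(w,I):=\bigl|\{x\in I:(w,x)\in E\}\bigr|,$$
and by the Markov property $p(z\mid s)=\prod_{k=0}^{T-2}\p(v_{k+1}\mid I_k)$.

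The crux is a one–step invariance: for every $I\subset V$ and every $w\notin I$, $\p\bigl(\pi(w)\mid\pi(I)\bigr)=\p(w\mid I)$. This follows from condition (3) of Definition~\ref{defi:1stIso}, i.e. $\pi$ preserves $A$, so $(a,b)\in E\iff(\pi(a),\pi(b))\in E$. Hence $x\mapsto\pi(x)$ restricts to a bijection from $\{x\in I:(w,x)\in E\}$ onto $\{x'\in\pi(I):(\pi(w),x')\in E\}$, giving $e(\pi(w),\pi(I))=e(w,I)$; summing this identity over the bijection $w'\mapsto\pi(w')$ from $V\setminus I$ onto $V\setminus\pi(I)$ yields equality of the denominators too. (Only conditions (1) and (3) are used here; condition (2), which keeps $\pi$ supported near $u,v$, matters only for detecting isomorphic pairs cheaply in practice.)

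Finally I would assemble these by induction along the path. Fix $z=\{u,v_1,\dots,v_{T-1}\}$ and set $z_\pi=\{\pi(u),\pi(v_1),\dots,\pi(v_{T-1})\}$; since $\pi(u)=v$, the prefix infected sets of $z_\pi$ are exactly $\pi(I_k)$. Applying the one–step invariance at each step,
$$p(z_\pi\mid v)=\prod_{k=0}^{T-2}\p\bigl(\pi(v_{k+1})\mid\pi(I_k)\bigr)=\prod_{k=0}^{T-2}\p\bigl(v_{k+1}\mid I_k\bigr)=p(z\mid u).$$
Because $\pi$ is a bijection of $V$, the map $z\mapsto z_\pi$ is a bijection of $\zcal_T$ that carries the support of the $u$–diffusion onto that of the $v$–diffusion (validity of $z_\pi$ as a path from $v$ again uses that $\pi$ is an automorphism with $\pi(u)=v$); together with the displayed mass identity this shows $Z_\pi$ has exactly the law of a random diffusion path started at $v$. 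The main obstacle is bookkeeping rather than depth: one must phrase the SI transition probabilities so that automorphism–invariance is transparent, and check carefully that $\pi$ maps the event ``the $k$-th infection is $v_k$ given the history'' to the corresponding event for $z_\pi$ with source $v$, including preservation of connectivity/validity of prefixes — all of which reduces to $\pi$ preserving adjacency.
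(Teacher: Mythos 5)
Your proposal is correct and follows essentially the same route as the paper: both rest on the identity $p(z_\pi\mid v)=p(z\mid u)$ derived from $\pi$ preserving the adjacency matrix, followed by a bijection argument via $\pi^{-1}$ (or equivalently via $z\mapsto z_\pi$ being a bijection of the path space). You simply make explicit the one-step transition-probability invariance that the paper asserts directly from condition (3), and your side remark that condition (2) is needed only for efficient detection of isomorphic pairs, not for the distributional claim itself, is accurate.
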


To use the MC samples of one node to its isomorphic nodes according to Theorem~\ref{thm:isomorphism}, we need an efficient algorithm to identify all isomorphic pairs and the corresponding permutations. Directly checking Definition~\ref{defi:1stIso} is costly. To speed up the computation, we identify \emph{necessary} conditions for isomorphism in Proposition~\ref{prop:iso-screen}.

\begin{prop}\label{prop:iso-screen}
If $u$ and $v$ are first-order isomorphic, we must have $d_u = d_v$ and $D_1(u) = D_1(v)$ where $d_u$ and $d_v$ are the degrees of $u$ and $v$, $D_1(u)$ and $D_2(v)$ are the degree sequence (sorted in ascending order) of $N_1(u)$ and $N_1(v)$. Furthermore, $u$ and $v$ have the same second-order neighbor sets. That is, $N_2(u) = N_2(v)$.
\end{prop}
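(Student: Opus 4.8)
The plan is to recognize that the permutation $\pi$ certifying first-order isomorphism is nothing but a graph automorphism of $G$, and then to read off all three necessary conditions from the fact that automorphisms preserve local graph invariants. Indeed, condition (3) of Definition~\ref{defi:1stIso}, $A = A_{\tilde{\pi},\tilde{\pi}}$, says precisely that $(i,j)\in E$ if and only if $(\pi(i),\pi(j))\in E$; hence $\pi$ preserves vertex degrees, and more generally it preserves graph distances, $\mathrm{dist}(\pi(i),\pi(j)) = \mathrm{dist}(i,j)$ for all $i,j$, so it maps the set $N_k(w)$ of vertices at distance exactly $k$ from $w$ bijectively onto $N_k(\pi(w))$. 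These two facts (degree preservation, distance preservation) are all the machinery the proof needs; everything else is bookkeeping with the extra locality constraint (2).

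First I would dispatch the degree claim directly from $\pi(u)=v$: $d_v = d_{\pi(u)} = d_u$. For the first-neighbor degree sequences, note that $\pi$ restricts to a bijection $N_1(u)\to N_1(v)$ (neighbors of $u$ map to neighbors of $\pi(u)=v$, and $\pi^{-1}$ carries neighbors of $v$ back to neighbors of $u$). Combining this with degree preservation, the multiset $\{\,d_w : w\in N_1(u)\,\}$ equals $\{\,d_{\pi(w)} : w\in N_1(u)\,\} = \{\,d_{w'} : w'\in N_1(v)\,\}$; sorting both in ascending order yields $D_1(u)=D_1(v)$.

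The real content is the set equality $N_2(u)=N_2(v)$, which is an equality of actual vertex subsets, not merely of isomorphic copies. From distance preservation and $\pi(u)=v$ we get $\pi(N_2(u))=N_2(v)$, and since $\pi$ is a bijection this also gives $\pi^{-1}(N_2(v))=N_2(u)$. Now I would invoke condition (2): $\pi$ is the identity on every vertex outside the local set $S:=\{u,v\}\cup N_1(u)\cup N_1(v)$, and since $\pi$ fixes $S^{c}$ pointwise it maps $S$ onto $S$. Hence every vertex $w\notin S$ satisfies $w\in N_2(u)\iff \pi(w)=w\in N_2(v)$, and symmetrically in the other direction, so the symmetric difference $N_2(u)\,\triangle\,N_2(v)$ is contained in $S$. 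Since a vertex at distance exactly $2$ from $u$ is neither $u$ nor a neighbor of $u$, the part of $N_2(u)$ that meets $S$ can only lie in $\{v\}\cup N_1(v)$, and symmetrically the part of $N_2(v)$ meeting $S$ lies in $\{u\}\cup N_1(u)$; it remains to argue these boundary pieces cannot occur, using the structure that the isomorphism $\pi$ forces on the neighborhoods of $u$ and $v$. I expect precisely this boundary analysis to be the main obstacle: it is the only step where the locality of $\pi$ (condition (2)), rather than its being an automorphism, is genuinely needed, and it is exactly the regime in which $u$ and $v$ sit close to one another in $G$.

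Finally, I would remark that all three conditions are cheap to test — two degree look-ups, one comparison of sorted length-$d_u$ integer lists, and one set comparison — so Proposition~\ref{prop:iso-screen} serves its intended purpose of pruning non-isomorphic candidate pairs before the costlier direct verification of Definition~\ref{defi:1stIso} is attempted.
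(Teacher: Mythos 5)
Your first two claims ($d_u = d_v$ and $D_1(u) = D_1(v)$) are proved correctly and by essentially the same route as the paper: $\pi$ restricts to a degree-preserving bijection from $N_1(u)$ onto $N_1(v)$. The gap you flag in the third claim, however, is genuine and --- more importantly --- cannot be closed, because the ``boundary pieces'' you hoped to rule out really do occur. Take $G = C_4$ with vertices $u,a,v,b$ and edges $ua, av, vb, bu$: the transposition $\pi = (u\ v)$ satisfies all three conditions of Definition~\ref{defi:1stIso}, so $u$ and $v$ are first-order isomorphic, yet $N_2(u) = \{v\} \ne \{u\} = N_2(v)$. A path $u$--$a$--$w$--$v$ gives a second counterexample in which the offending vertices lie in $N_1(u)\cup N_1(v)$ rather than in $\{u,v\}$. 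So the set equality $N_2(u) = N_2(v)$ is false as literally stated; what your distance-preservation argument correctly establishes is exactly the off-boundary version, namely $N_2(u)\,\triangle\,N_2(v) \subseteq \{u,v\}\cup N_1(u)\cup N_1(v)$, and no amount of further analysis will shrink that containment to the empty set in general.

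For what it is worth, the paper's own proof of this part has the same blind spot: it picks $w \in N_2(u)\setminus N_2(v)$ and immediately uses $\pi(w) = w$, which condition (2) licenses only when $w$ lies outside the local set, so it too proves only the trimmed statement. The screening procedure (Algorithm~\ref{alg:isomorphicCheck}) is in fact consistent with the trimmed statement --- it compares $\tilde{N}_2(u) = N_2(u) - N_1(u) - N_1(v) - \{u,v\}$ with $\tilde{N}_2(v)$, not $N_2(u)$ with $N_2(v)$. The correct fix is therefore to restate the last assertion of the proposition with $\tilde{N}_2$ in place of $N_2$; with that amendment your argument goes through verbatim and the boundary analysis you anticipated becomes unnecessary. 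Your instinct that the boundary is where the difficulty lives was exactly right; the resolution is that the statement, not the proof, needs to change there.
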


Based on Proposition~\ref{prop:iso-screen} we can efficiently identify isomorphism using pre-screening steps. This turns out to significantly speed up our computation. Details of the algorithm are described by Algorithm~\ref{alg:isomorphicCheck} in Appendix~\ref{app:more-iso}.  With the isomorphic relations available, we can partition the nodes into isomorphic groups. Then MC sampling is only needed for one node in each group, and the MC samples can be shared within the group according to Theorem~\ref{thm:isomorphism}. Specifically, suppose $Z_1, \cdots Z_{2m}$ are sampled from the diffusion process from $u$. If $u$ and $v$ are isomorphic with permutation $\pi$, we can use $(Z_1)_{\pi}, (Z_2)_{\pi}, \cdots, (Z_{2m})_{\pi}$ as the MC samples of $v$ in Algorithm~\ref{algo:generic}.

\begin{rem} \emph{Definition \ref{defi:1stIso} can be extended to higher-order neighborhoods, identifying more isomorphic pairs. However, the complexity of identifying such  pairs increases exponentially with the order of neighbors, which may overwhelm the saved time on the MC side. The first-order isomorphism turns out to give the most desirable tradeoff in terms of computational efficiency.} 
\end{rem}

\section{Experimental Studies}\label{sec:sim}  
In this section, we evaluate our proposed methods   on   well-studied random   network models. We generate networks from three random network models: random 4-regular trees, the preferential attachment model \citep{barabasi1999emergence} and the small-world (S-W) network model \citep{watts1998collective}. In network science, the preferential attachment model is usually used to model the scale-free property of networks that is conjectured by many as ubiquity in real-world networks \citep{barabasi2013network}. The small-world property is believed to be prevalent in social networks \citep{watts1998collective}. The network size is $N=1365$ (the size of regular tree with degree 4 and depth 6) with $T=150$. The networks are sparse with average degree below 4. The Monte Carlo size $m$ is 4000. Source node in all settings is set to be the one with median eigen-centrality to be more realistic. All reported results are averaged over 200 independent replications.

 \begin{figure*}[ht]
\vspace{-0.3cm}
  \begin{center}
    \includegraphics[width=0.95\textwidth]{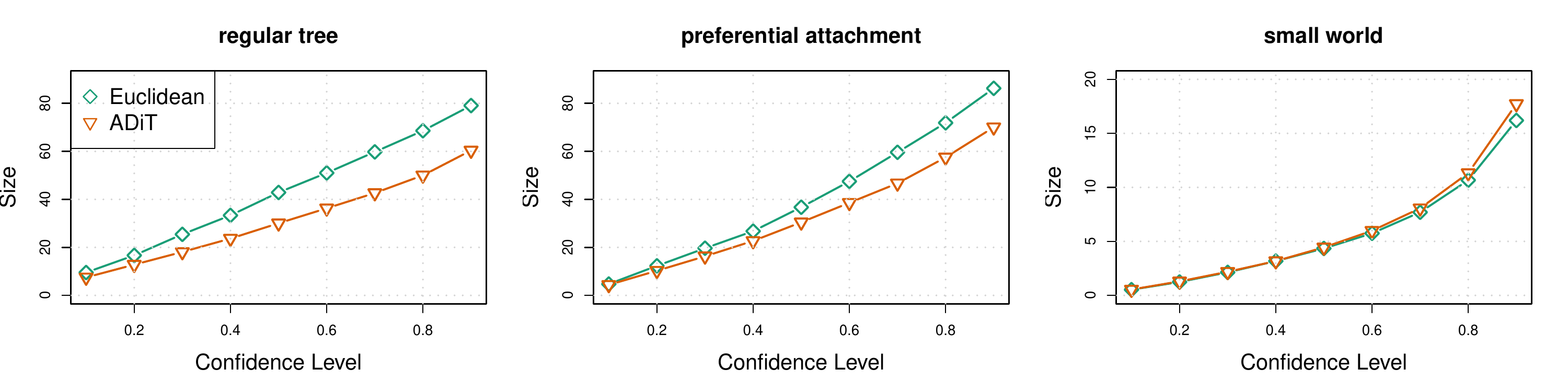}
  \end{center}
\vspace{-0.5cm}
  \caption{The average size of confidence sets for different confidence levels. }
  \label{fig:set-size}
\end{figure*}


We first evaluate the performance of the single-point source estimation accuracy from the rumor center and distance center of \cite{shah2011rumors,khim2016confidence, bubeck2017finding,yu2018rumor}, as well as estimator using our proposed framework with discrepancy functions $\ell_{se}$ and ADiT. The result is shown in the Table~\ref{tab:souce-point-estimation}.   Though the two estimators based on our framework are always better, the overall message from the table  is not promising. All of the methods, including ours, give poor accuracy that is too low to be useful in applications. Such a negative result convincingly shows that the DSI problem is generally too difficult for the single-point estimation strategy to work, and exploring the alternative confidence set inference is necessary.

\vspace{-0.05in}
\begin{table}[ht]
\caption{The correct rate of single-point estimation methods across 200 replications. }
\label{tab:souce-point-estimation}
\vspace{-0.2in}
\begin{center}
\begin{small}
\begin{sc}
\begin{tabular}{lcccc}
\hline
& reg. tree  & Pref. Att.   & S-W \\ 
\hline
Rumor center    & 0& 0 & 0.08\\
Dist. center & 0& 0&0.07\\
Euclidean (ours)  & 0& 0.04  &0.20\\
\text{ADiT} (ours)  & 0& 0.02 & 0.16\\
\hline
\end{tabular}
\end{sc}
\end{small}
\end{center}
\vskip -0.1in
\end{table}

Table~\ref{tab:souce-confidence-set} shows the coverage rate of the confidence sets, with the squared Euclidean distance and the ADiT as the discrepancy functions. Notably, the proposed confidence set procedure delivers the desired coverage in all settings (up to the simulation error). Meanwhile, the size of the confidence set varies substantially depending on the network structure. For regular trees and scale-free networks, the ADiT works much better than the Euclidean distance, indicating that the diffusion order is informative in this type of network structure. For the small-world networks, the two are very similar. This may indicate that for well-connected networks, the diffusion order is less informative. In general, we believe the adaptivity of the ADiT- based confidence set is always preferable.

\vspace{-0.05in}
\begin{table}
	\caption{The coverage rate of the confidence sets across 200 replications. }
	\label{tab:souce-confidence-set}
	\vspace{-0.2in}
	\begin{center}
		\begin{small}
			\begin{sc}
				\begin{tabular}{l|cccc}
					\hline
					& reg. tree  & Pref. Att.   & S-W \\ 
					\hline
					Euclidean-90\%  &93\% & 91\% & 92\% \\ 
					size  & 79.1 & 86.2 & 16.0 \\ 
					\hline
					\text{ADiT}-90\% & 93\% & 89\% & 91\% \\ 
					size & 60.4 & 70.5 & 17.7 \\ 
					\hline
					Euclidean-80\%  & 85\% & 84\% & 83\% \\ 
					size & 68.6 & 71.9 & 10.5 \\ 
					\hline
					\text{ADiT}-80\%  & 84\% & 82\% & 81\% \\ 
					size & 50.0& 57.5 & 11.3 \\ 
					\hline
				\end{tabular}
			\end{sc}
		\end{small}
	\end{center}
	\vskip -0.1in
\end{table}

To obtain a comprehensive view of the tradeoff between the set size and confidence level, we show the relationship between the confidence set's average size and the confidence level in  Figure~\ref{fig:set-size}. The relation is slightly sup-linear. In connection with the single-point estimation results, notice that for small-world networks, the confidence set with a confidence level 20\% has average size of around 1. In contrast, the regular tree and preferential attachment network are more difficult, and to guarantee at 10\%, the average size of the confidence set is already about 5. These observations verify the results in  Table~\ref{tab:souce-point-estimation} and support our argument that, in general, inferring the source by a single-point estimator is hopeless.

\begin{table}
\caption{The timing comparison of the sequential running time for the proposed pooled MC strategies (in sec.).}
\label{tab:timing}
\vspace{ -0.2in}
\begin{center}
\begin{small}
\begin{sc}
\begin{tabular}{lcccc}
\hline
& reg. tree  & Pref. Att.   & S-W \\ 
\hline
Vanilla MC  &898.3& 1060.5 & 1052.9 \\ 
\hline
 Import. Sampl.  & 602.3 & 462.1 & 1057.5 \\ 
\hline
Isomorphism & 617.8 & 516.8 & 1059.4 \\ 
\hline
Both & 411.9 & 430.9 & 1057.7 \\ 
\hline
\end{tabular}
\end{sc}
\end{small}
\end{center}
\vskip -0.3in
\end{table}

Finally, we also evaluate the timing improvements achieved by the pooled MC strategies. The power of the pooled MC strategies depends on network structures, as expected. The timing comparison for the pooled MC strategies is included in Table~\ref{tab:timing}. The timing included is only the sequential version of our method for a fair comparison with the rumor center. As can be seen, with both of the pooled MC strategies used, we can reduce the timing by about 60\% for tree structure and the preferential attachment networks, but the effects on small-world networks are negligible. In applications, it is observed that network structure tends to exhibit core-periphery patterns \citep{miao2021informative}, in which the pooled MC strategies are expected to be effective.   For reference, the average timing for finding the rumor center is about 1.25 seconds. However, with the extra computational cost,  our method provides a \emph{confidence sets at any specified levels}, with guaranteed accuracy for any network structures. We believe it is generally a wise tradeoff.

Meanwhile, notice that our inference procedure can be parallelized. We give a parallel algorithm in appendix (see Algorithm~\ref{algo:parallel} in Appendix~\ref{sec:parallel-algo}). It needs MC sampling for only one node in each group, and the calculations for other nodes can be done using pooled MC methods. Table~\ref{tab:timing-parallel} includes the timing results of the parallel version implementation based on 20 cores in the same settings as Table~\ref{tab:timing} of the main paper. With 20 cores, the time needed for a confidence set construction is less than 30 seconds for cases when the pooled MC methods are effective.

\begin{table}[ht]
\caption{Comparison of the parallel  running time for the proposed pooled MC strategies (in sec.) on 20 cores. }
\label{tab:timing-parallel}
\vspace{ -0.2in}
\begin{center}
\begin{small}
\begin{sc}
\begin{tabular}{lcccc}
\hline
& reg. tree  & Pref. Att.   & S-W \\ 
\hline
Vanilla MC  &59.3& 61.9 & 71.5 \\ 
\hline
 Import. Sampl.  & 33.4 & 32.9 & 73.3 \\ 
\hline
Isomorphism & 44.0& 33.4 & 73.4 \\ 
\hline
Both & 26.3 & 28.4 & 72.0 \\ 
\hline
\end{tabular}
\end{sc}
\end{small}
\end{center}
\vskip -0.3in
\end{table}

\section{Summary}\label{sec:discussion}
We have introduced a statistical inference framework for diffusion source identification on networks. Compared with previous methods, our framework is more general and renders salient insights about the problem. More importantly, within this framework, we can construct the confidence set for the source node in a more natural and principled way such that the success rate can be guaranteed on any network structure. To our knowledge, our method is the first DSI method with theoretical guarantees for general network structures. We also propose efficient computational strategies that are potentially useful in other problems as well.


\section*{Acknowledgements}
The work is supported by the Quantitative Collaborative Award from the College of Arts and
Sciences at the University of Virginia. T. Li is supported in part by the
NSF grant DMS-2015298. H. Xu is supported by a GIDI award from the UVA Global Infectious Diseases Institute. 

\bibliography{CommonBib}{}

\begin{thebibliography}{28}
\providecommand{\natexlab}[1]{#1}
\providecommand{\url}[1]{\texttt{#1}}
\expandafter\ifx\csname urlstyle\endcsname\relax
  \providecommand{\doi}[1]{doi: #1}\else
  \providecommand{\doi}{doi: \begingroup \urlstyle{rm}\Url}\fi

\bibitem[Anderson \& May(1992)Anderson and May]{anderson1992infectious}
Anderson, R.~M. and May, R.~M.
\newblock \emph{Infectious diseases of humans: dynamics and control}.
\newblock Oxford university press, 1992.

\bibitem[Barab{\'a}si(2013)]{barabasi2013network}
Barab{\'a}si, A.-L.
\newblock Network science.
\newblock \emph{Philosophical Transactions of the Royal Society A:
  Mathematical, Physical and Engineering Sciences}, 371\penalty0
  (1987):\penalty0 20120375, 2013.

\bibitem[Barab{\'a}si \& Albert(1999)Barab{\'a}si and
  Albert]{barabasi1999emergence}
Barab{\'a}si, A.-L. and Albert, R.
\newblock Emergence of scaling in random networks.
\newblock \emph{science}, 286\penalty0 (5439):\penalty0 509--512, 1999.

\bibitem[Besag \& Clifford(1989)Besag and Clifford]{besag1989generalized}
Besag, J. and Clifford, P.
\newblock Generalized monte carlo significance tests.
\newblock \emph{Biometrika}, 76\penalty0 (4):\penalty0 633--642, 1989.

\bibitem[Bringmann \& Panagiotou(2017)Bringmann and
  Panagiotou]{bringmann2017efficient}
Bringmann, K. and Panagiotou, K.
\newblock Efficient sampling methods for discrete distributions.
\newblock \emph{Algorithmica}, 79\penalty0 (2):\penalty0 484--508, 2017.

\bibitem[Bubeck et~al.(2017)Bubeck, Devroye, and Lugosi]{bubeck2017finding}
Bubeck, S., Devroye, L., and Lugosi, G.
\newblock Finding adam in random growing trees.
\newblock \emph{Random Structures \& Algorithms}, 50\penalty0 (2):\penalty0
  158--172, 2017.

\bibitem[Crane \& Xu(2020)Crane and Xu]{crane2020inference}
Crane, H. and Xu, M.
\newblock Inference on the history of a randomly growing tree.
\newblock \emph{arXiv preprint arXiv:2005.08794}, 2020.

\bibitem[Dong et~al.(2013)Dong, Zhang, and Tan]{dong2013rooting}
Dong, W., Zhang, W., and Tan, C.~W.
\newblock Rooting out the rumor culprit from suspects.
\newblock In \emph{2013 IEEE International Symposium on Information Theory},
  pp.\  2671--2675. IEEE, 2013.

\bibitem[Halperin \& Almogy(2002)Halperin and Almogy]{halperin2002system}
Halperin, A. and Almogy, G.
\newblock System and method of virus containment in computer networks,
  December~19 2002.
\newblock US Patent App. 10/058,809.

\bibitem[Hu et~al.(2020)Hu, Guan, Zhu, Shao, Liu, Du, Liu, Zhou, Wang, She,
  et~al.]{hu2020building}
Hu, T., Guan, W.~W., Zhu, X., Shao, Y., Liu, L., Du, J., Liu, H., Zhou, H.,
  Wang, J., She, B., et~al.
\newblock Building an open resources repository for covid-19 research.
\newblock \emph{Data and Information Management}, 4\penalty0 (3):\penalty0
  130--147, 2020.

\bibitem[Jockel(1986)]{jockel1986finite}
Jockel, K.-H.
\newblock Finite sample properties and asymptotic efficiency of monte carlo
  tests.
\newblock \emph{The annals of Statistics}, pp.\  336--347, 1986.

\bibitem[Kazemitabar \& Amini(2020)Kazemitabar and
  Amini]{kazemitabar2020approximate}
Kazemitabar, S.~J. and Amini, A.~A.
\newblock Approximate identification of the optimal epidemic source in complex
  networks.
\newblock In \emph{International Conference on Network Science}, pp.\
  107--125. Springer, 2020.

\bibitem[Khim \& Loh(2016)Khim and Loh]{khim2016confidence}
Khim, J. and Loh, P.-L.
\newblock Confidence sets for the source of a diffusion in regular trees.
\newblock \emph{IEEE Transactions on Network Science and Engineering},
  4\penalty0 (1):\penalty0 27--40, 2016.

\bibitem[Lab(2020)]{DVN/FAEZIO_2020}
Lab, C.~D.
\newblock {Baidu Mobility Data}, 2020.
\newblock URL \url{https://doi.org/10.7910/DVN/FAEZIO}.

\bibitem[LeCun et~al.(2015)LeCun, Bengio, and Hinton]{lecun2015deep}
LeCun, Y., Bengio, Y., and Hinton, G.
\newblock Deep learning.
\newblock \emph{nature}, 521\penalty0 (7553):\penalty0 436--444, 2015.

\bibitem[L'Ecuyer \& Owen(2009)L'Ecuyer and Owen]{l2009monte}
L'Ecuyer, P. and Owen, A.~B.
\newblock \emph{Monte Carlo and Quasi-Monte Carlo Methods 2008}.
\newblock Springer, 2009.

\bibitem[Lehmann \& Romano(2006)Lehmann and Romano]{lehmann2006testing}
Lehmann, E.~L. and Romano, J.~P.
\newblock \emph{Testing statistical hypotheses}.
\newblock Springer Science \& Business Media, 2006.

\bibitem[Miao \& Li(2021)Miao and Li]{miao2021informative}
Miao, R. and Li, T.
\newblock Informative core identification in complex networks.
\newblock \emph{arXiv preprint arXiv:2101.06388}, 2021.

\bibitem[Newman et~al.(2002)Newman, Forrest, and Balthrop]{newman2002email}
Newman, M.~E., Forrest, S., and Balthrop, J.
\newblock Email networks and the spread of computer viruses.
\newblock \emph{Physical Review E}, 66\penalty0 (3):\penalty0 035101, 2002.

\bibitem[Nguyen et~al.(2016)Nguyen, Ghosh, Mayo, and Dinh]{nguyen2016multiple}
Nguyen, H.~T., Ghosh, P., Mayo, M.~L., and Dinh, T.~N.
\newblock Multiple infection sources identification with provable guarantees.
\newblock In \emph{Proceedings of the 25th ACM International on Conference on
  Information and Knowledge Management}, pp.\  1663--1672, 2016.

\bibitem[Shah \& Zaman(2011)Shah and Zaman]{shah2011rumors}
Shah, D. and Zaman, T.
\newblock Rumors in a network: Who's the culprit?
\newblock \emph{IEEE Transactions on information theory}, 57\penalty0
  (8):\penalty0 5163--5181, 2011.

\bibitem[Shah \& Zaman(2012)Shah and Zaman]{shah2012finding}
Shah, D. and Zaman, T.
\newblock Finding rumor sources on random graphs.
\newblock 2012.

\bibitem[Shapiro \& Delgado-Eckert(2012)Shapiro and
  Delgado-Eckert]{shapiro2012finding}
Shapiro, M. and Delgado-Eckert, E.
\newblock Finding the probability of infection in an sir network is np-hard.
\newblock \emph{Mathematical biosciences}, 240\penalty0 (2):\penalty0 77--84,
  2012.

\bibitem[Valiant(1984)]{valiant1984theory}
Valiant, L.~G.
\newblock A theory of the learnable.
\newblock \emph{Communications of the ACM}, 27\penalty0 (11):\penalty0
  1134--1142, 1984.

\bibitem[Vosoughi et~al.(2018)Vosoughi, Roy, and Aral]{vosoughi2018spread}
Vosoughi, S., Roy, D., and Aral, S.
\newblock The spread of true and false news online.
\newblock \emph{Science}, 359\penalty0 (6380):\penalty0 1146--1151, 2018.

\bibitem[Watts \& Strogatz(1998)Watts and Strogatz]{watts1998collective}
Watts, D.~J. and Strogatz, S.~H.
\newblock Collective dynamics of ‘small-world’networks.
\newblock \emph{nature}, 393\penalty0 (6684):\penalty0 440--442, 1998.

\bibitem[Xu \& Ren(2016)Xu and Ren]{xu2016propagation}
Xu, Y. and Ren, J.
\newblock Propagation effect of a virus outbreak on a network with limited
  anti-virus ability.
\newblock \emph{PloS one}, 11\penalty0 (10):\penalty0 e0164415, 2016.

\bibitem[Yu et~al.(2018)Yu, Tan, and Fu]{yu2018rumor}
Yu, P.-D., Tan, C.~W., and Fu, H.-L.
\newblock Rumor source detection in finite graphs with boundary effects by
  message-passing algorithms.
\newblock In \emph{IEEE/ACM International Conference on Advances in Social
  Networks Analysis and Mining}, pp.\  175--192. Springer, 2018.

\end{thebibliography}
\bibliographystyle{icml2021}

\newpage

\begin{appendix}
	\onecolumn 
%

\vspace{1cm}

\section{Proofs}

\subsection{Proof of Theorem~\ref{thm:ValidityCI}}

Define $q_{T,s^*}^{\alpha} = \inf_{t}\{t: \p_{s^*}(T_{s^*}(\zeta(Z))\ge t) \le \alpha\}$. Notice that $q_{T,s^*}^{\alpha}$ can be seen as one generalized definition for the right quantile of the distribution of the random variable $T_{s^*}(\tilde{Y})$, where $\tilde{Y}:= \zeta(Z)$ is a random infection status of the network generated by the diffusion process starting from $s^*$.
	
	Now assume $Y$ is a random infection status from the diffusion process from $s^*$. According to the definition of the p-value, we have
	\begin{align*}
		\p_{s^*}\left(s^* \in S(Y)\right) & = \p_{s^*}\left(\psi_{s^*}(Y) > \alpha\right)\\
		& = \p_{s^*}\left(  \p_{s^*}\left(T_{s^*}(\tilde{Y}) \ge T_{s^*}(Y)\right) > \alpha\right)\\
		& \ge \p_{s^*}\left(T_{s^*}(Y) < q_{T,s^*}^{\alpha}\right)\\
		& = 1-\p_{s^*}\left(T_{s^*}(Y) \ge  q_{T,s^*}^{\alpha}\right).
	\end{align*}
	Note that since $s^*$ is the true source node, $T_{s^*}(Y)$ and $T_{s^*}(\tilde{Y})$ are following exactly the same distribution, thus
	$$ \p_{s^*}\left(s^* \in S(Y)\right)  \ge 1-\p_{s^*}\left(T_{s^*}(Y) \ge  q_{T,s^*}^{\alpha}\right)  \ge 1-\alpha.$$

%

\subsection{Proof of Theorem~\ref{thm:neighborhood-path}}

Given a path generated by the diffusion process starting from $v_0$ \textbf{containing $u$}, denoted by 
$$z = \{v_0, s_1, s_2, \cdots, s_{K-1}, u, s_{K+1}, \cdots, s_T\},$$
we match it to the path $f_{u}(Z)$ defined as 
$$f_{u}(z) = \{u, v_0, s_1, \cdots, s_{K-1}, s_{K+1}, \cdots, s_T\}.$$
We start from the probability mass of $z$ starting from $v_0$. By using the Markov property, we have
\begin{align}\label{eq:pmf1}
p(z|v_0) = &\p(s_1|v_0)\p(s_2|v_0,s_1) \cdots \p(s_{K-1}|v_0,s_1, \cdots, s_{K-2}) \notag\\
 & ~~ \times \p(s_{K+1}|v_0,\cdots, u) \cdots \p(s_{T}|v_0, \cdots, s_{T-1})\\
 & ~~~~~~\times \p(u|v_0,s_1, \cdots, s_{K-1})\notag
\end{align}

In contrast, for the path $f_{u}(z)$, we have

\begin{align}\label{eq:pmf2}
\p(f_{u}(z)|u) = &\p(v_0|u)\p(s_1|u,v_0)\p(s_2|u,v_0,s_1) \cdots \p(s_{K-1}|u,v_0,s_1, \cdots, s_{K-2}) \notag\\
 & ~~ \times \p(s_{K+1}|u, v_0,\cdots, s_{K-1})  \cdots \p(s_{T}|u,v_0, \cdots, s_{T-1})\notag\\
 =&\p(s_1|u,v_0)\p(s_2|u,v_0,s_1) \cdots \p(s_{K-1}|u,v_0,s_1, \cdots, s_{K-2}) \notag\\
 & ~~ \times \p(s_{K+1}|u, v_0,\cdots, s_{K-1})  \cdots \p(s_{T}|u,v_0, \cdots, s_{T-1}).
\end{align}

Notice that the conditional probability $\p(s_{k+1}|v_0,\cdots, u, s_{K+1}, \cdots s_{k}), k > K$ only depends on the infection status before the $k$th infection and is invariant to the infection order. This property indicates that all terms after the $K+1$th (in the second rows) of \eqref{eq:pmf1} and \eqref{eq:pmf2} are equal.

Next, we compare the terms in the first line in each of \eqref{eq:pmf1} and \eqref{eq:pmf2}. Notice that for each $k < K$, the term $\p(s_{k}|v_0,s_1, \cdots, s_{k-1})$ is uniform for each available connections given the infected nodes $v_0,s_1, \cdots, s_{k-1}$ while the term $\p(s_{k}|u,v_0,s_1, \cdots, s_{k-1})$ is uniform on all available edges given the infected nodes $u,v_0,s_1, \cdots, s_{k-1}$. The only difference in the two infected sets is on $u$. Since $u$ has only one connection to $v_0$, at each point, the number of available infecting edges is one more in the former case. Therefore, we have
$$\p(s_{k}|u,v_0,s_1, \cdots, s_{k-1}) = \frac{1}{1-\p(s_{k}|v_0,s_1, \cdots, s_{k})}\p(s_{k}|v_0,s_1, \cdots, s_{k}), k< K.$$

In addition, notice that in the third line of \eqref{eq:pmf1}, there is one extra term that does not appear in \eqref{eq:pmf2}. Combining the aforementioned three relations, we final obtain probability mass factor to be
\begin{equation}\label{eq:pmf-factor}
\frac{\p(f_{u}(\pi)|S_0=u)}{\p(\pi|S_0=v_0)} = \frac{1}{(1-\p(s_1|v_0))(1-\p(s_2|v_0,s_1)) \cdots (1-\p(s_{K-1}|v_0,s_1, \cdots, s_{K-2}))\p(u|v_0,s_1, \cdots, s_{K-1})}
\end{equation}
Moreover, if $\pi$ does not contain $u$, we set the ratio to be 0.

\subsection{Proof of Theorem~\ref{thm:importance-sampling}}
Since $Z_i$'s are a random sample from $p_1$, under the current assumption, by the strong law of large numbers, we have
$$\p\left( \hat{\eta} \to \e_1[\frac{g(\phi(Z))}{|\phi^{-1}(\phi(Z))|}\frac{p_2(\phi(Z))}{p_1(Z)} ]\right) = 1.$$
Notice that $\phi$ is a surjection. Therefore,  the term $\e_1[\frac{g(\phi(Z))}{|\phi^{-1}(\phi(Z))|}\frac{p_2(\phi(Z))}{p_1(Z)} ]$ can be rewritten as
\begin{align*}
\e_1[\frac{g(\phi(Z))}{|\phi^{-1}(\phi(Z))|}\frac{p_2(\phi(Z))}{p_1(Z)} ] & = \sum_{z \in \ccal_1}\frac{g(\phi(z))}{|\phi^{-1}(\phi(z))|}\frac{p_2(\phi(z))}{p_1(z)}p_1(z)\\
& = \sum_{z \in \ccal_1}\frac{g(\phi(z))}{|\phi^{-1}(\phi(z))|}p_2(\phi(z))\\
& = \sum_{z \in \ccal_1'}\frac{g(\phi(z))}{|\phi^{-1}(\phi(z))|}p_2(\phi(z))+\sum_{z \in \ccal_1/\ccal_1'}\frac{g(\phi(z))}{|\phi^{-1}(\phi(z))|}p_2(\phi(z))\\
& = \sum_{z \in \ccal_1'}\frac{g(\phi(z))}{|\phi^{-1}(\phi(z))|}p_2(\phi(z))\\
& = \sum_{\tilde{z} \in \ccal_2}\sum_{z: \phi(z) = \tilde{z}}\frac{g(\phi(z))}{|\phi^{-1}(\phi(z))|}p_2(\phi(z))\\
& = \sum_{\tilde{z} \in \ccal_2}\sum_{z: \phi(z) = \tilde{z}}\frac{g(\tilde{z})}{|\phi^{-1}(\tilde{z})|}p_2(\tilde{z})\\
& = \sum_{\tilde{z} \in \ccal_2}\sum_{z\in \phi^{-1}(\tilde{z})}\frac{g(\tilde{z})}{|\phi^{-1}(\tilde{z})|}p_2(\tilde{z})\\
& = \sum_{\tilde{z} \in \ccal_2}|\phi^{-1}(\tilde{z})|\frac{g(\tilde{z})}{|\phi^{-1}(\tilde{z})|}p_2(\tilde{z})\\
& = \sum_{\tilde{z} \in \ccal_2}g(\tilde{z})p_2(\tilde{z})\\
& = \e_2[g(Z)].
\end{align*}

\subsection{Proof of Corollary~\ref{coro:single-degree-importance-sampling}} 
We will use $f_u(z)$ in place of $\phi$ to apply Theorem~\ref{thm:importance-sampling}. The only remaining step is to find $|f_u(f_u^{-1}(z))|$. By the definition of $f_u$ in Theorem~\ref{thm:importance-sampling}, it is easy to see that the other $T-1$ nodes (except $u$ and $v_0$) and their order uniquely determine to the mapped path. Therefore,  $|f_u(f_u^{-1}(z))|$ would always be $T$ in this situation.

\subsection{Proof of Theorem~\ref{thm:isomorphism}}
Notice that, given $T$, the probability mass function of a diffusion path only depends on the network $A$ and the source node. Condition 1 of Definition~\ref{defi:1stIso} indicates that $Z_{\pi}$ starts from $v$. Condition 2 and condition 3 of Definition~\ref{defi:1stIso} together indicate that $Z_{\pi}$ is also a valid diffusion path. Condition 3, in particular, indicates that 
$$p_u(Z) = p_v(Z_{\pi}).$$
Now define can define $\pi^{-1}$ to be the inverse of $\pi$.  For any $\tilde{Z}$ from $v$, for the same reason, $\tilde{Z}_{\pi^{-1}}$ is also a valid diffusion path starting from $u$. In particular, we have $Z = (Z_{\pi})_{\pi^{-1}}$. Therefore, $Z_{\pi}$ has the same sample space and probability mass function as the random diffusion path from $v$.

\section{ Details for Isomorphism Identification in Section \ref{sec:isomophism}}\label{app:more-iso}

\begin{algorithm}[h] 
	\caption{Identification of First-Order Isomorphic Pairs}
	\label{alg:isomorphicCheck}
	\begin{algorithmic}
		\STATE {\bfseries Input:} Graph $G=(V,E)$
		\STATE  Initialize $L=\emptyset$ to store the list of isomorphic node pairs
		\FOR{every node $u \in V$} 
		\STATE Compute $N_{1-4}(u)$ as the set of all neighbors  of $u$ within $4$ hops 
		\STATE \quad // \texttt{ $N_{1-4}(u)$ includes all nodes that are possible to be isomorphic to $u$} 
		\FOR{$v \in N_{1-4}(u)$} 
		\IF{$d_v  ==  d_u$} 
		\STATE Compute  $D_1(u) = \{ d_{u'}: u' \in N_1(u)  \}$,  the multi-set of  degrees of nodes in $N_1(u)$
		\STATE Similarly, compute   $D_1(v)$,  the multi-set of  degrees of all one-hop neighbors  of $v$
		\IF{  $D_1(u) == D_1(v)$ }  
		\STATE Compute $\tilde{N}_2(u) = N_2(u) - N_1(u) - N_1(v) - \{ u, v\}$ 
		\STATE \quad // \texttt{ $\tilde{N}_2(u)$ contains  all (exactly) two-hop neighbors  of $u$, but with  all (exactly) one-hop neighbors  of $u,v$ removed  }    
		\STATE Compute  $\tilde{N}_2(v) = N_2(v) - N_1(u) - N_1(v) - \{ u, v\}$ 
		\IF{ $\tilde{N}_2(u) == \tilde{N}_2(v)$  }
		\STATE Do exhaustive search to check whether $u,v$ are isomorphic by enumerating all possible matchings of their neighbors, and if so, add $(u,v)$ to list $L$
		\STATE \quad // \texttt{ Hopefully, not many pairs need to go through this step} 
		\ENDIF 
		\ENDIF
		\ENDIF
		\ENDFOR
		\ENDFOR
	\end{algorithmic}
\end{algorithm}

Based on the  properties in Proposition \ref{prop:iso-screen}, Algorithm \ref{alg:isomorphicCheck} finds all isomorphic pairs and the permutations in the network. Next, we provide a proof of Proposition \ref{prop:iso-screen}. 

\begin{proof}[Proof of Proposition \ref{prop:iso-screen}] 
$d_u = d_v$ because $\pi$ gives a 1-1 mapping from $N_1(u)$ to $N_1(v)$. Furthermore,  we have $\pi(N_1(u)) = N_1(v)$. By condition 3 of Definition~\ref{defi:1stIso}, we also have $D_1(u) = D_1(v)$. 

For the last one, we can prove by contradiction. Suppose there exists a node $w$, such that $w \in N_2(u)$ but $w\notin N_2(v)$. Since $\pi(N_1(u)) = N_1(v)$ while $\pi(w) = w$, so after applying the permutation to the network, we have $\pi(w)$ disconnected from $\pi(N_1(w))$. This contradicts condition 3 of Definition~\ref{defi:1stIso}.
\end{proof}

\section{Loss Function Computation Acceleration for Surjective Importance Sampling}\label{appendix:integral-importance}

The calculation strategy for canonical discrepancy functions can also be further generalized to the weighted averaging scenario used for the single-degree nodes in Section~\ref{secsec:d-1node}.  Specifically, there we need to calculate terms like
\begin{align*}
 \frac{1}{m}\sum_{i=m+1}^{2m}\ell\left(y, f_u(z_i)\right)\frac{\p\left(f_{u}(z_i)|u\right)}{\p\left(z_i|v_0\right)}\frac{1}{T}& = -\frac{1}{m}\sum_{i=m+1}^{2m}\sum_{v: y_v = 1}\mbone(v \in f_u(z_i))h(t_{f_u(z_i)}(v))\frac{\p\left(f_{u}(z_i)|u\right)}{\p\left(z_i|v_0\right)}\frac{1}{T}\\
 & = -\frac{1}{m}\sum_{v: y_v = 1}\sum_{i=m+1}^{2m}\mbone(v \in f_u(z_i))h(t_{f_u(z_i)}(v))\frac{\p\left(f_{u}(z_i)|u\right)}{\p\left(z_i|v_0\right)}\frac{1}{T}\\
 & = -\frac{1}{m}\sum_{v: y_v = 1}\sum_{i=m+1}^{2m}\mbone(v \in f_u(z_i))[\sum_{k=1}^T\mbone(t_{f_u(z_i)}(v) = k)h(k)]\frac{\p\left(f_{u}(z_i)|u\right)}{\p\left(z_i|v_0\right)}\frac{1}{T}\\
 & =  -\frac{1}{m}\sum_{v: y_v = 1}\sum_{i=m+1}^{2m}\sum_{k=1}^T\mbone(v \in f_u(z_i))\mbone(t_{f_u(z_i)}(v) = k)h(k)\frac{\p\left(f_{u}(z_i)|u\right)}{\p\left(z_i|v_0\right)}\frac{1}{T}\\
 & = -\frac{1}{m}\sum_{v: y_v = 1}\sum_{k=1}^Th(k)[\sum_{i=m+1}^{2m}\mbone(t_{f_u(z_i)}(v) = k)\frac{\p\left(f_{u}(z_i)|u\right)}{\p\left(z_i|v_0\right)}\frac{1}{T}].
\end{align*}
 Therefore, to use this strategy in Section~\ref{secsec:d-1node}, when general MC samples from $v_0$, in addition to caching $M$, we also want to cache the matrix adjusted by the factor 
 $$M^{(v_0\to u)}_{v,k} = \sum_{i=m+1}^{2m}\mbone(t_{f_u(z_i)}(v) = k)\frac{\p\left(f_{u}(z_i)|u\right)}{\p\left(z_i|v_0\right)}\frac{1}{T}.$$
 
\section{Parallel Algorithm for Confidence Set Construction}\label{sec:parallel-algo}

As discussed in Section~\ref{secsec:algorithm}, our confidence set construction algorithm can be implemented in parallel, further boosting its speed. In the main paper, we only include the details and timing for the sequential version.  The parallelized algorithm is described in Algorithm~\ref{algo:parallel}.

\begin{algorithm}[ht]
	\begin{algorithmic}[1]
		\STATE {\bfseries Input:}  MC sample number $m$, confidence level $\alpha$, Network $G$, data $y$,  discrepancy function $\ell$ 
        \STATE Compute $S = \{g_1, g_2, \cdots, g_M\}$, the isomorphic groups for infected nodes with degree at least 2.
        \FOR{each $g \in S$}
        \STATE Extend $g$ by including all of its single-degree neighbor.
        \ENDFOR
        \FOR{ each infected isomorphic group $g \in S$  \textbf{in parallel}}
        \STATE Select any $s \in g$ with degree at least 2
		\STATE Generate $2m$ samples $z_i \in \zcal, i= 1, \cdots, 2m$ from the $T$-step diffusion process from source $s$. 
		\STATE Calculate the p-value for $s$ following \eqref{eq:MC-approximation}, \eqref{eq:MC-approximation-insample}, and \eqref{eq:MC-p-value}
                \FOR{each $v\in g$ that is isomorphic to $s$}
		\STATE Calculate $\hat{\psi}_v(y)$ according to Theorem~\ref{thm:isomorphism}.
		\ENDFOR
		\FOR{each single-degree node $v \in g$}
		\STATE{Calculate the p-value $\hat{\psi}_v(y)$ according to the surjective importance sampling in Section~\ref{secsec:d-1node}.}
		\ENDFOR
   		\ENDFOR
		\STATE {\bfseries return}   the level $1-\alpha$ confidence set:
		$$\ccal_{\alpha}(y) = \{s \in V_{I}: \hat{\psi}_s(y) > \alpha\}.$$
		\caption{Parallel Confidence Set Construction}
        \label{algo:parallel}
	\end{algorithmic}
\end{algorithm}

\end{appendix}

\end{document}